\newtheorem{definition}{Definition}
\begin{document}

\title{Placing Timely Refreshing Services \\ at the Network Edge}

\author{Xishuo~Li,~\IEEEmembership{Student~Member,~IEEE,}
				Shan~Zhang,~\IEEEmembership{Member,~IEEE,}	
				Hongbin~Luo,~\IEEEmembership{Member,~IEEE,}
				Xiao~Ma,~\IEEEmembership{Member,~IEEE,}
                Junyi~He,~\IEEEmembership{Student~Member,~IEEE,}


\thanks{This work was supported in part by the National Key R\&D Program of China under Grant 2022YFB4501000, in part by the Nature Science Foundation of China under Grant 62271019, 62225201, and in part by National Engineering Research Center of Advanced Network Technologies. (\textit{Corresponding author: Shan Zhang})}

\thanks{X. Li, S. Zhang, and H. Luo are with School of Computer Science and Engineering, Beihang University, Beijing 100191, China. (e-mail: lixishuo@buaa.edu.cn; zhangshan18@buaa.edu.cn; luohb@buaa.edu.cn)}

\thanks{X. Ma is with the State Key Laboratory of Networking and Switching Technology, Beijing University of Posts and Telecommunications, Beijing 100876, China (e-mail: maxiao18@bupt.edu.cn)}

\thanks{J. He is with the School of Science and Engineering, the Chinese University of Hong Kong, Shenzhen 518172, China (e-mail: junyihe@link.cuhk.edu.cn)}

\thanks{Copyright (c) 20xx IEEE. Personal use of this material is permitted. However, permission to use this material for any other purposes must be obtained from the IEEE by sending a request to pubs-permissions@ieee.org.}
}

\markboth{Journal of \LaTeX\ Class Files,~Vol.~14, No.~8, August~2021}%
{Shell \MakeLowercase{\textit{et al.}}: A Sample Article Using IEEEtran.cls for IEEE Journals}


\maketitle

\begin{abstract}
Accommodating services at the network edge is favorable for time-sensitive applications. However, maintaining service usability is resource-consuming in terms of pulling service images to the edge, synchronizing databases of service containers, and hot updates of service modules. Accordingly, it is critical to determine which service to place based on the received user requests and service refreshing (maintaining) cost, which is usually neglected in existing studies. In this work, we study how to cooperatively place timely refreshing services and offload user requests among edge servers to minimize the backhaul transmission costs. We formulate an integer non-linear programming problem and prove its NP-hardness. This problem is highly non-tractable due to the complex spatial-and-temporal coupling effect among service placement, offloading, and refreshing costs. We first decouple the problem in the temporal domain by transforming it into a Markov shortest-path problem. We then propose a light-weighted Discounted Value Approximation (DVA) method, which further decouples the problem in the spatial domain by estimating the offloading costs among edge servers. The worst performance of DVA is proved to be bounded. 5G service placement testbed experiments and real-trace simulations show that DVA reduces the total transmission cost by up to 59.1\% compared with the state-of-the-art baselines.
\end{abstract}

\begin{IEEEkeywords}
Mobile edge computing, service placement, timely refreshing services
\end{IEEEkeywords}

\section{Introduction}
\label{sec:introduction}
\IEEEPARstart{B}{y} moving services from cloud datacenters to the network edge, mobile edge computing (MEC) can provision service containers on edge servers to process user requests in proximity and with low latency, and thus support time-critical applications \cite{DBLP:journals/comsur/AbkenarRIMCWJR22}. The development of MEC largely benefits from the recent progress of serverless computing, which enables the fast container provisioning process, i.e., creating and starting the service container instance(s) by loading the service image files in 50-500ms\cite{DBLP:journals/wc/XieTQZYH21,DBLP:conf/usenix/AkkusCRSSBAH18}. However, pulling the image files from a registry in the cloud to an edge server via the constrained backhaul network will lead to a significant transmission delay, e.g., 5-10s \cite{DBLP:conf/nsdi/ChenLGL22}. Hence, instead of pulling image files when requests arrived, existing MEC systems usually cache appropriate service images on edge servers in advance \cite{KubeEdge}.

\begin{figure}[tbp]
\centering
\includegraphics[width=\linewidth]{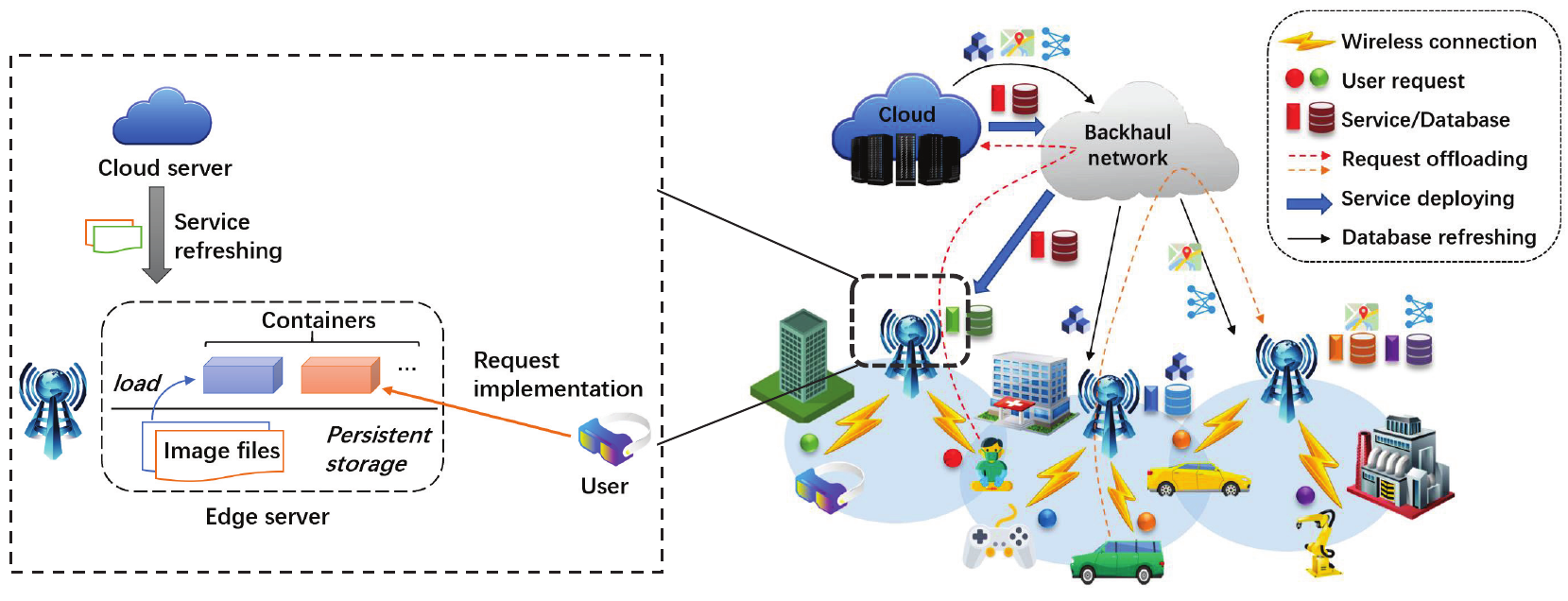}
\caption{An illustration of the MEC system, the left part presents the service placement and refreshing mechanism.}
\label{system}
\end{figure}

After caching images, the services placed on edge servers should be \textit{refreshed} timely to maintain their usability and functionality. On the one hand, the updating of the service databases in the cloud server should be synchronized with the service container instances provisioned at the network edge, e.g., traffic information, stock prices, etc \cite{bestavros2012real,Redis}. For example, autonomous driving services require to be updated with the latest high-definition maps of nearby streets to support route planning \cite{liu2020high}. On the other hand, the image files cached on edge servers also should be refreshed for hot updates of software modules, security patches, container re-configurations, and so on \cite{DBLP:conf/nsdi/ChenLGL22}. For instance, AI-based healthcare services require to be updated with the latest and most accurate neural network model to detect diseases \cite{rieke2020future}. Such service refreshing traffic brings considerable transmission pressure on the backhaul network \cite{DBLP:journals/dpd/ShankerMS08,DBLP:journals/csur/NashatA18}. For example, transmitting the latest 3D point cloud data (traffic information) of nearby streets requires a bandwidth exceeding 100 Mbps \cite{DBLP:conf/vrml/CaoPZ19,DBLP:journals/comsur/WangLK19}. Generally, accommodating services at MEC is extremely resource-consuming due to frequently pulling service images and refreshing placed services.

Given that base stations and edge servers are densely deployed in 5G and beyond networks \cite{DBLP:journals/jsac/ShafiMSHZSTBW17}, it is a critical problem of choosing appropriate services to place and refresh on each edge server to serve user requests as more as possible, considering the limited MEC capacity and backhaul bandwidth. This problem of dynamically selecting the appropriate services from a huge set of candidates \footnote{For example, there are three million applications available on Google Play by 2022 \cite{num_in_googleplay}.} is also known as the \textit{service placement problem} or \textit{service provisioning problem} \cite{DBLP:journals/csur/Ait-SalahtDL20,DBLP:journals/comsur/SonkolyCSNT21}.

Previous studies have made great progress in service placement problems, considering system dynamics in terms of the user requests \cite{DBLP:journals/jsac/Karimzadeh-Farshbafan20,DBLP:conf/infocom/FloresTT20,DBLP:journals/cn/GolkarifardCMM21}, user mobility \cite{DBLP:journals/tmc/ZhaoLTTQ21,DBLP:journals/tpds/NingDWWHGQHK21}, and service dependency (e.g., correlated service components) \cite{DBLP:journals/comsur/SonkolyCSNT21,DBLP:journals/csur/Ait-SalahtDL20}. Nevertheless, these studies ignored the refreshing demand of services, which will lead to invalid services (users cannot be served) or overburdened backhaul (degraded network performance) in practical systems. Thus, we revisit the service placement problem, considering both the gain (serving requests at edge) and the pain (refreshing demand) of services, to minimize the overall backhaul transmission cost. Note that our work focuses on caching service image files on edge servers, which is the prerequisite of processing requests in MEC. This work is compatible with the fine-grained optimization studies in user association\cite{DBLP:journals/tnsm/BehraveshHCR21,tnse9767553}, computation offloading\cite{DBLP:journals/tvt/HuangLMZ21,DBLP:journals/corr/abs-2210-17025,DBLP:journals/iotj/QianZWWL23}, container lifecycle management \cite{DBLP:journals/spe/WuTFHZJYC22} and serverless computing \cite{DBLP:journals/wc/XieTQZYH21}, which usually assume the service images and computing environments are ready.

We consider a typical MEC system, as shown in Fig. \ref{system}. The service placement problem is explored in a cooperated multi-MEC scenario, wherein the placed services are refreshed periodically based on the corresponding timeliness requirements. Each edge server decides which services to place according to the upcoming user requests and the timeliness status of placed services. User requests will be served directly by the home edge server or offloaded to neighboring edge servers or the cloud, which will lead to different backhaul pressure. This \underline{T}imely \underline{R}efreshing \underline{S}ervice \underline{P}lacement problem (TRSP) is formulated as an NP-hard integer non-linear programming problem.

The main difficulty of TRSP is its complex spatial-and-temporal coupling property. Firstly, in the temporal domain, the service refreshing couples with the service placement decisions across multiple time slots because the timeliness of service decays linearly with time. Moreover, in the spatial domain, the refreshing and placement of services also couple with the request offloading decisions among different edge servers. As an non-Markov problem, TRSP can hardly be solved by existing methods which usually assume the decisions in different time slots are independent \cite{DBLP:journals/comsur/SonkolyCSNT21}. Besides, it is also computationally intractable to utilize exhaustive search to solve TRSP, whose time complexity grows exponentially with the number of edge servers, available services, and time slots.

To address the TRSP problem, we add the refreshing timer information into the system state space to decouple the service placement and offloading decisions in the temporal domain, whereby TRSP becomes a Markov problem and can be solved in a slot-by-slot manner. In this way, the TRSP problem equals a shortest path problem, and then we propose a dynamic programming-based centralized algorithm to obtain the optimal solution. The time complexity of the centralized method grows linearly (resp. exponentially) with the number of time slots (resp. the number of edge servers and available services).

To reduce the computational complexity, we design a light-weighted \underline{D}iscounted \underline{V}alue \underline{A}pproximation algorithm (DVA) for practical large-scale scenarios, which further decouples the transformed problem spatially. Specifically, in the spatial domain, DVA will estimate the offloading cost among edge servers based on the state of each edge server's neighbors. Then, DVA will predict the future backhaul transmission cost of current strategy to make decisions. The approximation ratio (gap between DVA and the optimal results) of DVA is analyzed theoretically, which is small when the popularity of services keeps stable. 5G service placement testbed experiments and extensive real-trace simulations show that DVA is close-to-optimal and reduces the backhaul transmission cost by up to 59.1\% compared with the state-of-the-art baselines. The results also indicate that setting larger storage space on edge servers may even increase the transmission cost since the service refreshing traffic may exceed the gain of serving users at network edge.

The main contributions of this work are as follows:
\begin{enumerate}
\item We study the problem of how to cooperatively place services and offload requests at the network edge, considering the practical refreshing demands of services. Aiming at minimizing the backhaul transmission cost, we formalize TRSP as an integer non-linear programming problem and prove its NP-hardness. The insight behind our problem is balancing the gain of serving more requests at the edge and the pain of increased backhaul pressure to maintain service usability.

\item TRSP is highly non-tractable due to the complex spatial-and-temporal coupling effect among service placement, offloading, and refreshing costs. We first decouple the problem in the temporal domain by modifying the system state space, whereby TRSP is transformed into a Markov shortest path problem. Then a dynamic programming-based algorithm with exponential complexity is proposed to obtain the optimal solution in a slot-by-slot manner.

\item To reduce the computation complexity, we propose a light-weighted algorithm, DVA, for large-scale scenarios. DVA further decouples TRSP in the spatial domain by approximating the future transmission cost on each edge server. The worst performance of DVA is proved to be bounded. 5G service placement testbed experiments and extensive real-trace simulations demonstrate that DVA is close-to-optimal and can reduce the backhaul transmission cost by up to 59.1\% compared with the state-of-the-art baselines.
\end{enumerate}

The rest of our paper is organized as follows. Section \ref{sec_review} reviews the related work. Next, section \ref{sec_problem_formu} describes the system model. Section \ref{pfa} formulates the TRSP problem and analyzes its complexity. Then, in section \ref{sec_alg_solutions}, both the centralized and distributed algorithms are proposed to solve the TRSP problem. Section \ref{sec_simulation} presents the experimental results. In the last section \ref{sec_conclusions}, we conclude this study and discuss future work.

\renewcommand\arraystretch{1.2}
\begin{table*}[]
\caption{Classifications of Representative Service Placement Studies}
\label{tab:my-table}
\centering
\begin{tabular}{|c|cccc|cc|}
\hline
\multirow{3}{*}{\textbf{Optimization goals}} &
  \multicolumn{4}{c|}{\textbf{MEC scenario}} &
  \multicolumn{2}{c|}{\textbf{Methodology}} \\ \cline{2-7} 
 &
  \multicolumn{1}{c|}{\multirow{2}{*}{\textit{\begin{tabular}[c]{@{}c@{}}Static\\ scenario\end{tabular}}}} &
  \multicolumn{3}{c|}{\textit{Dynamic scenario}} &
  \multicolumn{1}{c|}{\multirow{2}{*}{\textit{\begin{tabular}[c]{@{}c@{}}Offline\\ methods\end{tabular}}}} &
  \multirow{2}{*}{\textit{\begin{tabular}[c]{@{}c@{}}Online\\ methods\end{tabular}}} \\ \cline{3-5}
 &
  \multicolumn{1}{c|}{} &
  \multicolumn{1}{c|}{\textit{\begin{tabular}[c]{@{}c@{}}Dynamic network\\ infrastructure\end{tabular}}} &
  \multicolumn{1}{c|}{\textit{\begin{tabular}[c]{@{}c@{}}Dynamic\\ applications\end{tabular}}} &
  \textit{\begin{tabular}[c]{@{}c@{}}Mobile\\ users\end{tabular}} &
  \multicolumn{1}{c|}{} &
   \\ \hline
Service latency &
  \multicolumn{1}{c|}{\cite{DBLP:journals/ett/RaghavendraCG21,DBLP:journals/iotj/FangM21,DBLP:journals/tnsm/LiuGLY21,DBLP:journals/tpds/ChenZJQXGL22,DBLP:journals/ett/BhamareSEJG18}} &
  \multicolumn{1}{c|}{\cite{DBLP:journals/iotj/SarkarAKK22}} &
  \multicolumn{1}{c|}{\cite{DBLP:conf/vtc/TalpurG21,DBLP:journals/iotj/ZhangYPC20}} &
  \cite{DBLP:journals/tmc/ZhaoLTTQ21,DBLP:journals/tpds/NingDWWHGQHK21,DBLP:conf/vtc/TalpurG21} &
  \multicolumn{1}{c|}{\cite{DBLP:journals/iotj/SarkarAKK22,DBLP:journals/ett/RaghavendraCG21,DBLP:journals/iotj/FangM21,DBLP:journals/tnsm/LiuGLY21,DBLP:journals/ett/BhamareSEJG18}} &
  \cite{DBLP:journals/tmc/ZhaoLTTQ21,DBLP:journals/tpds/NingDWWHGQHK21,DBLP:conf/vtc/TalpurG21,DBLP:journals/iotj/ZhangYPC20,DBLP:journals/tpds/ChenZJQXGL22} \\ \hline
Energy consumption &
  \multicolumn{1}{c|}{\cite{DBLP:journals/ett/RaghavendraCG21,DBLP:journals/iotj/FangM21,DBLP:journals/tnsm/LiuGLY21,DBLP:journals/tsc/Duong-BaTNB21}} &
  \multicolumn{1}{c|}{\cite{DBLP:conf/ispa/MaaouiaFCJ18}} &
  \multicolumn{1}{c|}{\cite{DBLP:journals/tgcn/ChenSHTF21}} &
  \cite{DBLP:journals/tmc/ZhaoLTTQ21} &
  \multicolumn{1}{c|}{\cite{DBLP:journals/ett/RaghavendraCG21,DBLP:journals/iotj/FangM21,DBLP:journals/tnsm/LiuGLY21,DBLP:journals/ett/BhamareSEJG18,DBLP:journals/tsc/Duong-BaTNB21}} &
  \cite{DBLP:journals/tmc/ZhaoLTTQ21,DBLP:journals/tgcn/ChenSHTF21,DBLP:conf/ispa/MaaouiaFCJ18} \\ \hline
Resource utilization &
  \multicolumn{1}{c|}{\cite{DBLP:journals/tsc/Duong-BaTNB21,DBLP:conf/icc/AroraK21}} &
  \multicolumn{1}{c|}{\cite{DBLP:conf/networking/MaoWZX20}} &
  \multicolumn{1}{c|}{\cite{DBLP:conf/vtc/TalpurG21,DBLP:journals/tgcn/ChenSHTF21}} &
  \cite{DBLP:conf/vtc/TalpurG21} &
  \multicolumn{1}{c|}{\cite{DBLP:journals/tsc/Duong-BaTNB21,DBLP:conf/icc/AroraK21}} &
  \cite{DBLP:journals/tgcn/ChenSHTF21,DBLP:conf/vtc/TalpurG21,DBLP:conf/networking/MaoWZX20} \\ \hline
Placement cost &
  \multicolumn{1}{c|}{\cite{DBLP:journals/iotj/NguyenNTB22,DBLP:journals/tpds/ChenZJQXGL22,DBLP:journals/ett/BhamareSEJG18,DBLP:conf/netsoft/KhoshkholghiTBK19}} &
  \multicolumn{1}{c|}{\cite{DBLP:conf/wcnc/LiLCX21}} &
  \multicolumn{1}{c|}{\cite{DBLP:journals/jsac/Karimzadeh-Farshbafan20,DBLP:conf/infocom/FloresTT20}} &
  \cite{DBLP:conf/wcnc/LiLCX21,DBLP:journals/tnsm/BehraveshHCR21} &
  \multicolumn{1}{c|}{\cite{DBLP:conf/infocom/FloresTT20,DBLP:journals/tnsm/BehraveshHCR21,DBLP:journals/ett/BhamareSEJG18,DBLP:conf/netsoft/KhoshkholghiTBK19}} &
  \cite{DBLP:conf/wcnc/LiLCX21,DBLP:journals/jsac/Karimzadeh-Farshbafan20,DBLP:journals/iotj/NguyenNTB22,DBLP:journals/tpds/ChenZJQXGL22} \\ \hline
Others (e.g., throughput) &
  \multicolumn{1}{c|}{\cite{DBLP:journals/iotj/NguyenNTB22}} &
  \multicolumn{1}{c|}{\cite{DBLP:journals/fgcs/XieWD20,DBLP:conf/globecom/DalgkitsisM0KV20,DBLP:conf/networking/MaoWZX20}} &
  \multicolumn{1}{c|}{\cite{DBLP:journals/cn/GolkarifardCMM21,DBLP:journals/cn/YuanXYLCTGPW20}} &
  \cite{DBLP:journals/tnsm/BehraveshHCR21} &
  \multicolumn{1}{c|}{\cite{DBLP:journals/cn/GolkarifardCMM21,DBLP:journals/tnsm/BehraveshHCR21}} &
  \cite{DBLP:journals/cn/GolkarifardCMM21,DBLP:journals/cn/YuanXYLCTGPW20,DBLP:journals/fgcs/XieWD20,DBLP:conf/globecom/DalgkitsisM0KV20,DBLP:journals/iotj/NguyenNTB22,DBLP:conf/networking/MaoWZX20} \\ \hline
\end{tabular}
\end{table*}

\section{Literature Review}
\label{sec_review}

Mobile edge computing can provide low-latency service to end-users and relieve pressure on the backhaul network and cloud servers. A body of works have been conducted on service placement problems with resource constraints (e.g., caching, communication, and computation) to optimize various goals (e.g., service latency and resource utilization. We summarize some representative studies in table \ref{tab:my-table}. For more details, please refer to the survey papers \cite{DBLP:journals/csur/Ait-SalahtDL20,DBLP:journals/comsur/SonkolyCSNT21}. Generally, the existing works can be classified into static and dynamic categories based on the MEC scenarios, or offline and online categories according to whether the decisions are made based on future information. For convenience, we introduce these studies according to their scenarios.

The static scenario indicates that the MEC system is unchanged with time (e.g., without mobile users, breakdown of links/nodes). For instance, Chen \textit{et al.} designed a local-search based algorithm to allocate users and place services in edge environment, aiming at minimizing the service deployment and resource consumption cost \cite{DBLP:journals/tpds/ChenZJQXGL22}. Besides, Bhamare \textit{et al.} proposed an affinity-based scheduling method to reduce the overall turnaround time and deployment costs for scheduling microservices across multiple clouds \cite{DBLP:journals/ett/BhamareSEJG18,DBLP:conf/netsoft/KhoshkholghiTBK19, Gupta:2016:2215-0811:9}.

The dynamic scenario includes three aspects: dynamic network infrastructure, dynamic applications, and mobile end users. The dynamic network infrastructure corresponds to the join, leave, or breakdown of network entities and links. As an example, Sarkar \textit{et al.} proposed a deadline-aware strategy to dynamically select the suitable computing devices for IoT applications during failure in federated fog clusters \cite{DBLP:journals/iotj/SarkarAKK22}. Similarly, Li \textit{et al.} presented an online service placement approach to enable device-to-device offloading between mobile (dynamic) devices, aiming to minimize the computation, communication, and migration cost \cite{DBLP:conf/wcnc/LiLCX21}.

The dynamic applications represent that the applications change with time, e.g., the resource requirements are not fixed, and the application components may break down. For example, to handle the dynamic arrivals and departures of network function virtualization services,  Karimzadeh-Farshbafan \textit{et al.} designed a Viterbi-based reliability-aware placement method to minimize the placement cost and maximize the number of admitted services \cite{DBLP:journals/jsac/Karimzadeh-Farshbafan20}. Besides, Flores \textit{et al.} proposed a policy-aware virtual machine placement strategy to accommodate the dynamic traffic loads in data centers and minimize the communication cost \cite{DBLP:conf/infocom/FloresTT20}.

The mobility of end-users is a complex problem in service placement studies. One primary consideration is to decide whether to migrate services to places near the mobile users. As an example, Behravesh \textit{et al.} presented a heuristic algorithm to dynamically associate mobile users in edge networks and minimize the interruption of service function chain tasks \cite{DBLP:journals/tnsm/BehraveshHCR21}. Besides, Ning \textit{et al.} presented a Lyapunov optimization method to handle the long-term service migration problem in highly dynamic mobile networks with the objective of maximizing the system utility \cite{DBLP:journals/tpds/NingDWWHGQHK21}.

Compared with previous studies, our work takes the practical service refreshing demands into consideration. Aiming to minimize the backhaul transmission cost, we intend to balance the gain of serving more requests at the edge and the pain of increased backhaul pressure to maintain service usability. Due to the complex spatial-and-temporal coupling effect among service placement, refreshing, and offloading costs, the existing approaches can hardly be applied to solve the non-Markov problem. Thus, a novel light-weighted method should be proposed.

\section{System Model}
\label{sec_problem_formu}
In this section, we describe the details of the MEC system, including the service placement and refreshing mechanism. Then, the network model, the service placement model, and the service refreshing model are also formulated. The key notations in the system model are summarized in Table \ref{tab:key notations}.

\subsection{Service Placement and Refreshing Mechanism}
\label{TRSPm}

The considered service placement and refreshing mechanism is illustrated in Fig. \ref{system}. When a service is placed at the network edge, the image file of the service is cached on the edge server accordingly. Upon the arrival of user requests, one or more container instance(s) are created to serve the requests by loading the image file \cite{KubeEdge}. In addition, the container will be dynamically destroyed to release the occupied resources, e.g., memory, CPU, when the processing of the request finishes \cite{DBLP:conf/usenix/AkkusCRSSBAH18,DBLP:conf/icdcs/CastroIMS17} Through dynamically creating and destroying containers, the resources can be better utilized at the edge servers \cite{DBLP:journals/wc/XieTQZYH21}.

The databases and service modules of service container instances should be refreshed with the latest information synchronized from the cloud server timely \cite{DBLP:journals/csur/NashatA18}. For example, the financial trading information is usually updated every second \cite{DBLP:journals/access/MonratSA19}; the traffic information required by autonomous driving services should be updated every minute \cite{liu2020high}; the analysis models in healthcare services are updated in every hour \cite{rieke2020future}. Without loss of generality, the considered services are refreshed periodically. By setting appropriate refreshing periods, it can be ensured that the placed services will be requested shortly once refreshed with high probability. As for aperiodically refreshed services, the refreshing interval can be normalized into fixed length, and the normalization error can be migrated in the long term.

To maintain the flexibility of request offloading, the \textit{services} discussed in this paper are stateless services, such that different requests are independent and the container instances of the same service created on different edge servers are identical. Besides, the considered services are monolithic services, i.e., all service components are packed into one image file. The problem of placing dependable services (e.g., service function chain) is much more complicated due to the coupling property among distributed service components \cite{DBLP:journals/comsur/SonkolyCSNT21}, which is left as our future work.

\subsection{Network Model}
As shown in Fig. \ref{system}, we consider a system that consists of multiple edge servers \(\mathbb{N}=\left \{ 1,\cdots ,n \right \} \) and multiple end-users requesting a set of services \(\mathbb{S}=\left \{ 1,\cdots ,s \right \} \). We denote the image file size of service \(i\) by \(r_{i}\) and the storage capacity of edge server \(j\) by \(R_{j}\). Thus, the edge servers can only store a few number of image files (place few services). Besides, the cloud server (indexed by 0) runs all services and can process any received requests.

The discrete-time model is adopted, and the timeline consists of time slots \(\mathbb{T}=\left \{ 1,\cdots ,T \right \}\). The length of one time slot is one minute, such that the placed services can be ready to serve users. Besides, the minute-level decision making process on edge servers can handle the request dynamics of mobile users properly \cite{DBLP:conf/infocom/ZhouWTZ22}. We use \(\lambda _{i,j}^{(t)}\) to represent the (predicted) number of received requests for service \(i\) at edge server \(j\) in slot \(t\), which depends on the service popularity, user mobility, and status of the surrounding environments. In practical systems, such information is usually obtained via time series prediction methods \cite{lim2021time,tan2021high}, in which deep learning-based approaches are commonly utilized to establish the user request model/pattern based on the history request information.

\subsection{Service Placement Model}
Denote by $\alpha_i$ the volume of data transmitted via backhaul when placing service \(i\) at edge servers. The placed services should be refreshed timely to provide a satisfying user experience. We denote \(\beta_i\) as the volume of data downloaded via backhaul for refreshing service \(i\).

In the system, users will be served directly if the requested service is placed at the home edge server (the edge server that a user directly connects via wireless). Otherwise, the requests will be offloaded to neighboring edge servers or the cloud. We denote \(d_i\) as the volume of data transmitted via the backhaul for offloading one request of service \(i\). Generally, transmitting data between edge and cloud needs to go through the backbone network, leading to a higher backhaul pressure than transmission between edge servers \cite{DBLP:journals/jsac/ShafiMSHZSTBW17}. The traffic coefficient between edge server \(j\) and edge server \(k\), i.e., the cost of transmitting one unit (GB) data from \(j\) to \(k\), is denoted by \(\gamma _{j,k}\ge 0\). For convenience, we set the traffic coefficient between edge servers and the cloud as \(\gamma _{j,0}= 1, \forall j\in\left \{ 1,\cdots ,n \right \} \). Therefore, the transmission cost of offloading all requests for service \(i\) from edge server \(j\) to \(k\) in slot \(t\) is \(\gamma_{j,k}\lambda _{i,j}^{(t)}d_i\).

Finally, we denote by \(x_{i,j}^{(t)}\in\left \{ 0,1 \right \} \) a binary indicator showing whether service \(i\) is placed at edge server \(j\) in slot \(t\). Let \(y_{i,j,k}^{(t)}\in\left \{ 0,1 \right \} \) represent whether to offload service \(i\)'s requests from edge server \(j\) to edge server \(k\) (or the cloud server) in slot \(t\).

\renewcommand\arraystretch{1.3}
\begin{table}[]
\centering
\caption{Key Notations}
\label{tab:key notations}
\begin{tabular}{@{}cp{6.5cm}@{}}
\toprule
Notation           & Description                                                        \\ \midrule
\(r_{i}\)          & image file size of service \(i\)                                              \\
\(R_{j}\)          & storage capacity of edge server \(j\)                                          \\
\(\alpha_i\)       & volume of data transmitted via backhaul for placing service \(i\)                    \\
\(\beta _{i}\)     & volume of data transmitted via backhaul for refreshing service \(i\) \\
\(d_{i}\)          & volume of data transmitted via backhaul for offloading one request of service \(i\) \\
\(\gamma _{j,k}\)  & traffic coefficient between edge server \(j\) and \(k\)               \\
\(\lambda _{i,j}^{(t)}\) & the number of received requests for service \(i\) at edge server \(j\) in slot \(t\)                                   \\
\(l _{i,j}^{(t)}\) & remaining lifetime of service \(i\) at edge server \(j\) in slot \(t\)      \\
\(x_{i,j}^{(t)}\)        & binary indicator showing whether service \(i\) is placed at edge server \(j\) in slot \(t\)                     \\
\(y_{i,j,k}^{(t)}\)      & binary indicator showing whether to offload requests for service \(i\) from edge server \(j\) to \(k\) in slot \(t\)\\
\(Q^{(t)}\)        & system state in slot \(t\)                                         \\
\(\theta\)         & temporal approximation factor for predicting the future transmission cost                       \\
\(\delta _{i,j}\)  & spatial approximation factor of service \(i\) at edge server \(j\) for predicting the transmission cost of offloading service \(i\)'s requests              \\
\bottomrule
\end{tabular}
\end{table}

\subsection{Service Refreshing Model}
In the system, service \(i\) is refreshed every \(LF_i\) slots to maintain its usability. Namely, the refreshing interval is fixed for each specific service. Denote by \(l _{i,j}^{(t)}\) the remaining lifetime (i.e., remained time slots before next refreshing) of service \(i\) at edge server \(j\) in slot \(t\), which decreases linearly with time. When \(l _{i,j}^{(t)}\) decreases to zero, service \(i\) at edge server \(j\) must be refreshed. Without loss of generality, \(l_{i,j}^{(t)}\) is given by:

\begin{equation}
\label{remaining_lifetime}
l_{i,j}^{(t)}=
\left\{\begin{matrix}
l_{i,j}^{(t-1)}-1, & \mathrm{if}\ x_{i,j}^{(t-1)}=1,\ x_{i,j}^{(t)}=1\ \mathrm{and}\ l_{i,j}^{(t-1)}\ge1\\ 
LF_{i}, & \mathrm{otherwise},
\end{matrix}\right.
\end{equation}
where \( x_{i,j}^{(0)}=0, \forall i\in \mathbb{S} ,\forall j\in \mathbb{N}\) for initialization. We utilize a function \(\varphi \left ( l_{i,j}^{(t)} \right )\) to represent whether service \(i\) at edge server \(j\) should be refreshed in slot \(t\) based on the remaining lifetime \(l_{i,j}^{(t)}\), given by:

\begin{equation}
\varphi \left ( l_{i,j}^{(t)} \right )=\left\{\begin{matrix}
1, & \mathrm{if}\  l_{i,j}^{(t)}=0,\\ 
 0,& \mathrm{otherwise}.
\end{matrix}\right.
\end{equation}

\section{Problem Formulation and Analysis}
\label{pfa}
In this section, we formulate the timely refreshing service placement problem and analyze its hardness.

\subsection{Problem Formulation}

The backhaul transmission cost of the edge server \(j\) includes three parts: (1) \(D_{j}^{\mathrm{pla} }\) : the cost of placing services at edge server \(j\), (2) \(D_{j}^{\mathrm{ref} }\) : the cost of refreshing services at edge server \(j\), and (3) \(D_{j}^{\mathrm{off} }\): the cost of offloading user requests from edge server \(j\). Accordingly, The backhaul transmission cost of the edge server \(j\) in slot \(t\) is given by

\begin{equation}
\label{trans_cost_j}
\begin{split}
D_{j}^{(t)}&=D_{j}^{\mathrm{pla}}+D_{j}^{\mathrm{ref} }+D_{j}^{\mathrm{off}}\\
&=\sum_{i=1}^{s} \left ( 1-x_{i,j}^{(t-1)} \right ) x_{i,j}^{(t)}\alpha _{i}
+\sum_{i=1}^{s} \varphi \left ( l_{i,j}^{(t)} \right ) \beta _{i}\\&
+\sum_{i=1}^{s} \sum_{k=0}^{n} \left ( 1-x_{i,j}^{(t)} \right ) \lambda _{i,j}^{(t)}d_{i}\cdot y_{i,j,k}^{(t)}\gamma _{j,k} .
\end{split}
\end{equation}

The TRSP problem is formulated as follows:

\begin{align}
\mathrm{(P1)}\quad \min_{x,y} & \ D_{total} = \sum_{t = 1}^{T} \sum_{j = 1}^{n} D_{j}^{(t)} \nonumber \\
\mathrm{s.t.} 
& \sum_{i=1}^{s} x_{i,j}^{(t)}r_{i} \le R_{j}, \forall j\in \mathbb{N} ,\forall t\in \mathbb{T},\\
& \sum_{k = 0}^{n} y_{i,j,k}^{(t)} = 1, \forall i\in \mathbb{S} ,\forall j\in \mathbb{N} ,\forall t\in \mathbb{T},\\
& y_{i,j,k}^{(t)}\le x_{i,k}^{(t)}, \nonumber \\ & \forall i\in \mathbb{S} ,\forall j\in \mathbb{N} , \forall k\in \left \{ 0,\cdots ,n \right \}   ,\forall t\in \mathbb{T},\\
& x_{i,j}^{(t)}, y_{i,j,k}^{(t)} \in \left \{ 0,1 \right \} , \nonumber \\ &  \forall i\in \mathbb{S} ,\forall j\in \mathbb{N} ,\forall k\in \left \{ 0,\cdots ,n \right \},\forall t\in \mathbb{T},
\end{align}
where \(D_{total}\) represents the total backhaul transmission cost of all edge servers in all slots. Constraint (4) indicates that the total image file size of placed services cannot exceed the storage limit. The constraint (5) ensures that all user requests will be processed but not discarded. Constraint (6) represents that user requests can only be offloaded to the server that has placed the corresponding service. The last constraint (7) is the binary constraint of the decision variables. The TRSP problem is a non-linear integer optimization problem, which is notoriously hard to solve.

According to the system model, optimizing the backhaul transmission cost leads to serving requests at edge servers as more as possible. Accordingly, minimizing the backhaul transmission cost is also consistent with reducing the delay of user requests, which is the objective of many existing researches \cite{DBLP:journals/comsur/SonkolyCSNT21,DBLP:journals/csur/Ait-SalahtDL20}. In problem (P1), we do not explicitly model the constraint of edge servers' computation capacity, e.g., maximum CPU frequency. The main reason is, as discussed above, that we focus on caching service image files on edge servers, which prepares the computing environments in MEC \cite{DBLP:conf/nsdi/ChenLGL22}. More fine-grained optimization studies in computation offloading and container life-cycle management are also compatible with our work, which is an interesting direction for future research.

\subsection{Hardness of the TRSP Problem}
We first prove the TRSP problem is NP-hard by reduction from the 0-1 knapsack problem. The 0-1 knapsack problem considers that there are \(s\) items and each item \(i\) is equipped with weight \(w_i\) and value \(v_i\), and the objective is to maximize the total value of selected items under the constraint of total weight not exceeding \(R\). Next, we can construct an instance of TRSP problem by setting \(T=1\), \(n=1\), \(\lambda_{i,1}^{(1)}=(v_i+\alpha_i)/d_i\), \(r_i=w_i\) and \(R_{1}=R\), it can be easily shown that such TRSP problem is equal to the 0-1 knapsack problem. Therefore, TRSP is NP-hard.

The service refreshing cost in problem (P1) couples with the service placement and request offloading decisions both temporally (across multiple time slots) and spatially (across multiple edge servers). Specifically, there are \(Tns(n+2)\) decision variables in the problem, and \(2^{Tns(n+2)}\) possible combinations of these variables' value. Changing any one variable will impact on all other variables and the final transmission cost. Thus, the exhaustive search method's time complexity is \(O(2^{Tns(n+2)})\).

Next, if we remove the service placing and refreshing cost (e.g., \(\alpha_{i}\) and \(\beta_i\) are negligible compared with \(\lambda _{i,j}^{(t)}d_{i}\)), the service placement decisions at different time slots will become independent (i.e., changing one variable only impact other variables in the same slot). Specifically, the optimization problem of slot \(t\) is:
\begin{equation}
\label{subproblem_t}
\begin{split}
\min_{x,y} & \sum_{j = 1}^{n} \sum_{i=1}^{s} \sum_{k=0}^{n} \left ( 1-x_{i,j}^{(t)} \right ) \lambda _{i,j}^{(t)}d_{i}\cdot y_{i,j,k}^{(t)}\gamma _{j,k}  \\
\mathrm{s.t.}  & \ \left ( 4 \right ) \left ( 5 \right ) \left ( 6 \right ) \left ( 7 \right ).
\end{split}
\end{equation}

The problem (\ref{subproblem_t}) optimizes the service placement and request offloading decisions simultaneously. Note that this particular case of the TRSP problem is still NP-hard and has been studied in  \cite{DBLP:conf/icdcs/0001KWPS18,DBLP:conf/infocom/WangJHLM18}.

Therefore, the main challenge in solving the problem (P1) is its spatial-temporal coupling effect, i.e., the non-Markov property. When making decisions, the interaction between current decision variables and all other variables in different slots and edge servers should be considered. Among the existing methods that can handle coupled problems, the submodularity-based methods, e.g., greedy, local search, are widely utilized in service placement studies \cite{DBLP:conf/icdcs/0001KWPS18, DBLP:conf/infocom/0002ZHL021,DBLP:journals/tmc/MoubayedSHLB21}. However, the general TRSP problem is not submodular and cannot be solved by such methods, as discussed as follows.

\subsection{Special Case: Submodular Condition}
\begin{definition}
Given a finite ground set \(G\), a function \(F: 2^{G}\rightarrow \mathbb{R}\) is submodular if for any sets \(A\subseteq B\subseteq G\) and every element \(e \notin B\), it holds that:
\begin{equation}
    F\left ( A\cup \left \{ e \right \} \right )-F\left ( A \right )\geqslant F\left ( B\cup \left \{ e \right \} \right )-F\left ( B \right ).
\end{equation}
\end{definition}

Submodularity reflects the marginal diminishing effect, which is commonly utilized to handle complex combinatorial optimization problems. However, the submodularity of the TRSP problem only exists under strict conditions, i.e., the general TRSP problem is not submodular, and thus the widely-used submodularity-based methods do not suit in our problem.

\newtheorem{lemma}{Lemma}
\begin{lemma}
\label{submodular}
The TRSP problem is submodular if and only if:
\begin{itemize}
\item the traffic coefficient among edge servers and the cloud are equal: \(\gamma _{j,k}=1 ,\forall j\in \mathbb{N} ,\forall k\in \left \{ 0,\cdots ,n \right \}\) and
\item all services should refresh their database in every slot: \(LF_{i}=1, \forall i \in \left \{ 1,\cdots ,s \right \}\) and
\item for each service, the data volume of refreshing is equal to the service placement: \(\alpha_{i}=\beta_{i}, \forall i \in \left \{ 1,\cdots ,s \right \}\).
\end{itemize}
\end{lemma}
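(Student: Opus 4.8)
The plan is to read ``the TRSP problem is submodular'' as a statement about the set function induced by the placement decisions, and to prove the two implications separately: sufficiency by showing that the three conditions collapse the objective into a \emph{modular} (hence submodular) function, and necessity by exhibiting, for each condition, a small instance whose induced function violates the defining inequality~(9) of Definition~1. First I would fix the ground set $G=\{(i,j,t):i\in\mathbb{S},\,j\in\mathbb{N},\,t\in\mathbb{T}\}$, where selecting $(i,j,t)$ means setting $x_{i,j}^{(t)}=1$; a placement is then a subset $A\subseteq G$, and for each $A$ the offloading variables $y$ are pinned to their cost-minimizing feasible values under constraints~(5) and~(6). This yields a set function $F(A)=D_{total}$ to which~(9) is applied.

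For the sufficiency direction I would argue that each condition eliminates one source of coupling. When $\gamma_{j,k}=1$, the offloading term of any unplaced $(i,j,t)$ equals $\lambda_{i,j}^{(t)}d_i$ independently of the destination $k$ (using $\sum_{k}y_{i,j,k}^{(t)}=1$), so the spatial coupling disappears and the total offloading cost is $\sum_{(i,j,t)\notin A}\lambda_{i,j}^{(t)}d_i$. When $LF_i=1$, a placed service is refreshed in every slot it persists, so $\varphi$ charges $\beta_i$ on each continuation slot; combined with $\alpha_i=\beta_i$, the per-slot cost of keeping $(i,j)$ placed equals $\alpha_i$ whether the slot is a fresh placement ($x_{i,j}^{(t-1)}=0$) or a continuation, which annihilates the temporal coupling introduced by the one-off placement term. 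Substituting both simplifications gives $F(A)=\sum_{(i,j,t)\in A}\alpha_i+\sum_{(i,j,t)\notin A}\lambda_{i,j}^{(t)}d_i$, a modular function that satisfies~(9) with equality; hence TRSP is submodular.

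For the necessity direction I would drop one condition at a time and build a minimal violating instance. The cleanest case is $\gamma$: take $T=1$, one service, two edge servers with $\gamma_{1,2}<1=\gamma_{1,0}$ and requests only at server~$1$, and let $a$ denote ``place at $1$'' and $b$ denote ``place at $2$''. Then the marginal of adding $a$ to $\emptyset$ is $\alpha_i-\lambda_{i,1}^{(1)}d_i$, while its marginal on $\{b\}$ is $\alpha_i-\gamma_{1,2}\lambda_{i,1}^{(1)}d_i$; since $\gamma_{1,2}<1$ the latter strictly exceeds the former, so~(9) fails with $A=\emptyset\subseteq B=\{b\}$, $e=a$. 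For $LF_i\ne1$ and $\alpha_i\ne\beta_i$ I would use a single-server, single-service instance over a handful of slots, evaluate $F$ directly from the lifetime recurrence~(1)--(2), and locate a triple $A\subseteq B$, $e\notin B$ violating~(9); here the refresh count depends on the lengths of the maximal consecutive placement runs, so filling a gap (merging runs) or extending a run changes the marginal maintenance cost in a way that breaks the diminishing-marginal property.

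The main obstacle will be these two temporal counterexamples. The run-merging effect on the placement term is by itself submodular-compatible, so the violation must come from the refresh term, and it hinges delicately on the piecewise reset of $l_{i,j}^{(t)}$ when it hits $0$ and on the sign of $\alpha_i-\beta_i$; I therefore expect the difficult part to be computing the exact refresh counts from the recurrence and choosing the specific $\alpha_i,\beta_i,LF_i$ values and the specific $A,B,e$ for which~(9) is strictly violated, rather than merely non-tight. Once that bookkeeping is done, the counterexamples combine with the modularity argument to give the ``only if'' direction and complete the equivalence.
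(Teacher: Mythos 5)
Your proposal matches the paper's proof essentially step for step: the same ground set of placement triples $(i,j,t)$ with $y$ pinned to its optimal values, the same observation that the three conditions decouple slots and servers so that every marginal reduces to the fixed quantity $\lambda_{i,j}^{(t)}d_i-\beta_i$, making $F$ modular and hence submodular. If anything you go further than the paper on the converse, which it dismisses with ``counterexamples \ldots omitted due to the page limit,'' whereas your $\gamma$-counterexample is worked out explicitly and is correct, and your sketch of the temporal counterexamples identifies the right mechanism (the refresh-count dependence on run lengths under the lifetime recurrence).
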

\begin{proof}
In the TRSP problem, the service placement strategies in all slots are represented by set \(G=\left \{e_{1,1}^{1},\cdots ,e_{s,n}^{T} \right \}\). If service \(i\) is placed at edge server \(j\) in slot \(t\), then element \(e_{i,j}^{t}\in G\), otherwise \(e_{i,j}^{t}\notin G\). We use the function \(F\) to represent the total transmission cost for the given service placement \(G\). Suppose the three conditions hold. If we add an element \(e_{i,j}^{t} \notin B\) into placement set \(A\) and \(B\) (\(A\subseteq B\)), the cost related to elements before and after slot \(t\) (\(e_{i,j}^{\tau}, \forall \tau \ne t\)) will not be affected, since all services will be refreshed in every slot. Other elements in slot \(t\) (\(e_{i',j'}^{t}, \forall j' \neq j \ or \ i' \neq i\)) are also not impacted. Besides, the traffic coefficient among edge servers and the cloud are the same, thus there is no difference between offloading requests to near edge servers and to the cloud. Finally, adding element \(e_{i,j}^{t}\) brings additional cost of \(\beta_i\) and reduces the cost of offloading user requests for service \(i\):
\begin{equation}
\begin{split}
F\left ( A\cup \left \{ e_{i,j}^{t} \right \} \right )-F\left ( A \right )=\lambda _{i,j}^{(t)}d_{i}-\beta _{i}, \\
F\left ( B\cup \left \{e_{i,j}^{t} \right \} \right )-F\left ( B \right )=\lambda _{i,j}^{(t)}d_{i}-\beta _{i},
\end{split}
\end{equation}
thus, \(F\) is submodular. If any condition does not hold, we can construct counterexamples easily, which are omitted due to the page limit.
\end{proof}

Although there are many other algorithms proposed by previous studies to handle the service placement problem, these methods do not suit well in our case due to the specific form of problem formulation. Based on the system model, we obtain an integer non-linear programming (INLP) problem with non-consistent objective functions (i.e., the indicator function that represents whether a service should be refreshed in Eq. (2)). The commonly used non-linear programming methods (e.g., convex optimization, linear programming transformation, and mathematical programming solver) cannot handle non-consistent functions, and thus can hardly work \cite{DBLP:journals/comsur/SonkolyCSNT21}. On the other hand, the algorithms to solve general mathematical programming problems (e.g., machine learning, evolutionary algorithm, and heuristic algorithm) cannot guarantee the performance theoretically \cite{DBLP:journals/comsur/SonkolyCSNT21}. Therefore, instead of using the above approaches, we devise a new method to handle the formulated problem with a theoretical performance guarantee.

\section{Algorithms and Solutions}
\label{sec_alg_solutions}
As discussed above, TRSP is a non-Markov problem and can hardly be solved by existing methods. To handle this problem, we first transform the original TRSP problem into a Markov problem by carefully designing the system state space, through which the decisions in different slots can be independent. Then we propose a dynamic programming-based centralized algorithm based on the transformation to obtain the optimal solution to the TRSP problem. Specifically, the centralized method's time complexity scales linearly (exponentially) with the number of time slots (edge servers and available services).

A distributed low-complexity discounted value approximation method (DVA) is also devised. In the temporal domain, DVA enables each edge server to predict current decisions' long-term transmission cost based on discounting the future slots' cost by introducing a temporal approximation factor. In the spatial domain, each edge server estimates the requests offloading cost based on their neighbors' state in the previous slot, where a spatial approximation factor is utilized to reflect the confidence of offloading requests to neighbors. Thus, the decisions of each edge server are decoupled in both temporal and spatial domains to realize a low-complexity and efficient algorithm. Furthermore, the approximation ratio of DVA is also analyzed theoretically.

\subsection{Centralized Algorithm}
Although the TRSP problem is NP-hard, its optimal solution can be obtained by transforming the original problem into a Markov shortest path problem. For notation simplicity, denote by \(Q^{(t)}=\{ X^{(t)},Y^{(t)},L^{(t)} \} \) the system state in slot $t$, where \(X^{(t)}= \{ x_{1,1}^{(t)},\cdots, x_{s,1}^{(t)} ;\cdots; x_{1,n}^{(t)},\cdots , x_{s,n}^{(t)}\} \) is the service placement strategy, \(Y^{(t)}= \{ y_{1,1,0}^{(t)},\cdots ,y_{s,1,n}^{(t)}; \cdots ;y_{1,n,0}^{(t)}, \\\cdots ,y_{s,n,n}^{(t)} \} \) denotes the request offloading decisions and \(L^{(t)}= \{ l_{1,1}^{(t)},\cdots ,l_{s,1}^{(t)}; \cdots; l_{1,n}^{(t)},\cdots ,l_{s,n}^{(t)} \} \) represents the remaining lifetime information. Then we can define the transmission cost of changing from state \(Q^{(t-1)}\) in slot \(t-1\) to state \(Q^{(t)}\) in slot \(t\) as:
\begin{equation}
\label{state_trans_cost}
D\left ( Q^{(t-1)},Q^{(t)} \right ) =\sum_{j=1}^{n} D_j^{(t)},
\end{equation}
which is hereinafter referred to as \textit{state transition cost}. For convenience, we initialize the state \(Q^0=\left \{ \mathbf{0} ,\mathbf{0} ,\mathbf{0}  \right \} \). Then, the objective of problem (P1) can be rewritten as:
\begin{equation}
D_{total}=\sum_{t=1}^{T} D\left ( Q^{(t-1)},Q^{(t)} \right ) .
\end{equation}

Note that the state transition cost is memoryless, i.e., \(D ( Q^{(t-1)},Q^{(t)} )\) is independent with \(D ( Q^{(\tau -1)},Q^{(\tau)} ),\) \(\forall \tau \in {1,\cdots, t-1}\). Therefore, given the information of the two states in adjacent time slots, the state transition cost can be calculated independently. Given the state \(Q^{(t)}\) in slot \(t\), we denote by \(A^*\left ( Q^{(t)} \right ) \) as the minimal transmission cost from slot \(t\) to \(T\) (i.e., the cost-to-go function in dynamic optimization studies), given as:
\begin{equation}
A^*\left ( Q^{(t)} \right )=\min_{Q^{(t+1)},\cdots,Q^{(T)}}\sum_{\tau=t}^T D\left ( Q^{(\tau)},Q^{(\tau+1)}\right ),
\end{equation}
which is not impacted by previous slots (states from slot \(1\) to \( t-1\)).

\begin{figure}[tbp]
\centering
\includegraphics[width=\linewidth]{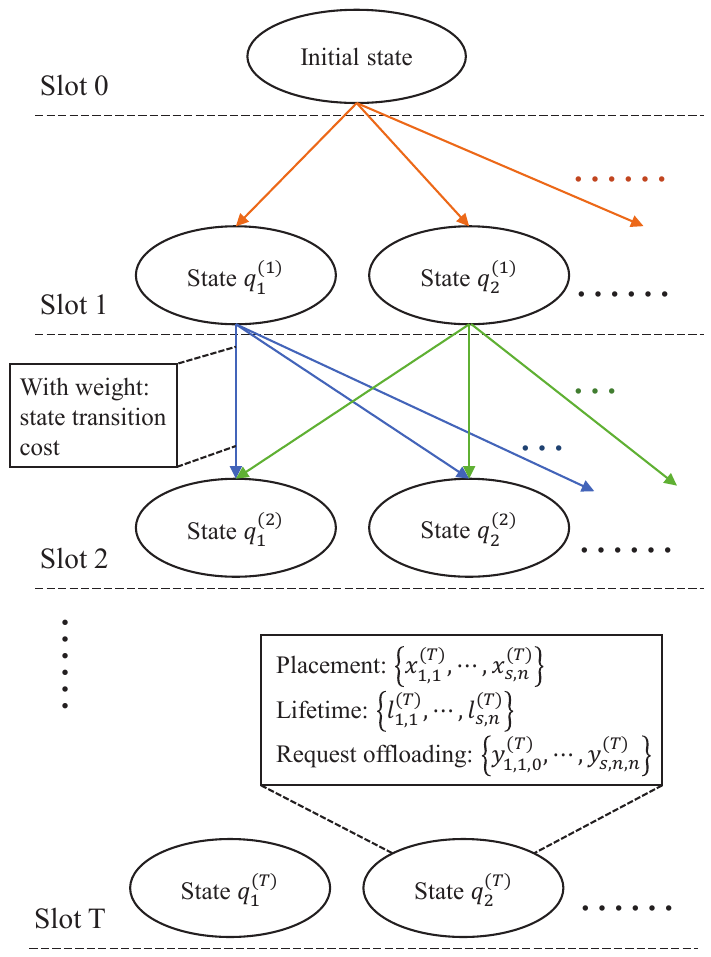}
\caption{Transforming the TRSP problem into a shortest path problem. Each node consists of service placement, request offloading decisions, and remaining lifetime information. The weight of paths represents the corresponding state transition cost.}
\label{shortest path}
\end{figure}

\begin{lemma}
The TRSP problem can be transformed into a shortest path problem.
\end{lemma}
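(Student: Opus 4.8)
The plan is to build an explicit time-layered directed graph whose minimum-weight source-to-sink path has length exactly equal to the optimal value of (P1), and then invoke the standard equivalence between the two optimization problems. First I would introduce one node layer per time slot: for each $t\in\{0,1,\dots,T\}$, the nodes of layer $t$ are all system states $Q^{(t)}=\{X^{(t)},Y^{(t)},L^{(t)}\}$ satisfying the feasibility constraints (4)--(7), together with the single source node $Q^{0}=\{\mathbf{0},\mathbf{0},\mathbf{0}\}$ in layer $0$. Since the storage constraint (4) bounds the number of placeable services, each layer contains only finitely many nodes, so the vertex set is well-defined and finite.

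Next I would insert the directed edges. I draw an edge from a node $Q^{(t-1)}$ in layer $t-1$ to a node $Q^{(t)}$ in layer $t$ precisely when the transition is \emph{admissible}, i.e. when the remaining-lifetime vector $L^{(t)}$ is the one produced from $L^{(t-1)}$, $X^{(t-1)}$ and $X^{(t)}$ through the update rule \eqref{remaining_lifetime}. I assign this edge the weight $D(Q^{(t-1)},Q^{(t)})=\sum_{j=1}^{n}D_j^{(t)}$ from \eqref{state_trans_cost}. The point that legitimizes this assignment is the \emph{memoryless} property already established: the refreshing term $\varphi(l_{i,j}^{(t)})\beta_i$, the placement term $(1-x_{i,j}^{(t-1)})x_{i,j}^{(t)}\alpha_i$, and the offloading term each depend only on the two endpoint states and nothing earlier, so the edge weight is a genuine function of its two endpoints. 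Because edges only ever point from layer $t-1$ to layer $t$, the graph is acyclic, and since all cost coefficients are nonnegative the weights are nonnegative, so a shortest path is well-defined.

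Then I would establish a bijection between feasible solutions of (P1) and layer-respecting paths $Q^{0}\to Q^{(1)}\to\cdots\to Q^{(T)}$. Given any feasible $(x,y)$, rule \eqref{remaining_lifetime} determines the lifetimes and hence a unique admissible sequence of states, i.e. a unique such path; conversely, reading the placement and offloading coordinates off the nodes of any layer-respecting path yields a feasible assignment, since edge admissibility enforces \eqref{remaining_lifetime} and node membership enforces (4)--(7). Under this correspondence the objective decomposes exactly: $D_{total}=\sum_{t=1}^{T}D(Q^{(t-1)},Q^{(t)})$ is precisely the sum of the edge weights along the path. Therefore minimizing $D_{total}$ over feasible solutions coincides with finding a minimum-weight path from $Q^{0}$ to layer $T$, which is a shortest path problem.

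I expect the main obstacle to lie in verifying the admissibility condition cleanly, in particular confirming that encoding $L^{(t)}$ into the state is exactly what converts the temporally coupled refreshing cost into a two-endpoint function, so that no path-history information is needed to compute any edge weight. Once this is settled, the shortest path is computable by the Bellman recursion $A^{*}(Q^{(t)})=\min_{Q^{(t+1)}}\big[D(Q^{(t)},Q^{(t+1)})+A^{*}(Q^{(t+1)})\big]$ on the cost-to-go function $A^{*}$ defined above, which also foreshadows the dynamic-programming algorithm developed in the next subsection.
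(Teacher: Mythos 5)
Your construction is correct and follows essentially the same route as the paper: states $\{X^{(t)},Y^{(t)},L^{(t)}\}$ as layered nodes, state transition costs as edge weights, and the Bellman recursion on the cost-to-go function $A^{*}$ to recover the optimum as a shortest path. Your added care in spelling out the admissibility condition via the lifetime update rule and the bijection between feasible solutions and layer-respecting paths is a sharper justification of the memoryless property that the paper asserts more briefly, but it is not a different argument.
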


\begin{proof}
We use nodes to represent the states in different slots and take the weight of a path to represent the corresponding state transition cost. Given all feasible nodes and paths, we can construct a graph, as illustrated in Fig. \ref{shortest path}.

As discussed above, every state in slot \(t\) is characterised by \(A^*\left ( Q^{(t)} \right ) \), and the optimal state \(q\) in slot \(t\) with minimal \(A^*\left ( q \right ) \) is not affected by previous slots. If \(A^*\left ( q \right ) \) corresponds to choosing state \(q_1\) in slot \(t+1\), \(q_2\) in slot \(t+2\), and so on, such state series will form an optimal path \(P_1= \{ q,q_1,q_2,\dots ,q_{T-t} \} \). Then, let path \(P_2=  \{ q_1,q_2,\dots ,\\ q_{T-t}  \} \) be part of \(P_1\), we can observe that \(P_2\) is also the optimal path starting from state \(q_1\) (otherwise \(P_1\) will choose a better sub-path). According to the Bellman equation, \(A^*\left ( Q^{(t)} \right ) \) can be formulated as:
\begin{equation}
\label{bellman}
\begin{split}
A^*\left ( Q^{(t)} \right ) =\min_{Q^{(t+1)}}\left [ D\left ( Q^{(t)},Q^{(t+1)} \right ) + A^*\left ( Q^{(t+1)} \right )\right ].
\end{split}
\end{equation}

Finally, the optimal solution to the TRSP problem is the shortest path starting from slot \(0\) (the initial state) to \(T\), which can be obtained via dynamic programming, as described in Algorithm \ref{optimal_alt}.
\end{proof}

Although the optimal solution can be obtained, the computation complexity is high due to the numerous states in each slot, i.e., the centralized method suffers from the curse of dimensionality \cite{bryson1998dynamic}. Let \(M=1+\max\left ( LF_{1},LF_{2},\cdots ,LF_{s} \right )\). The choices of value of service \(i\)'s remaining lifetime \(l_{i,j}^{(t)}\in \left \{ 0,1,\cdots ,LF_{i} \right \}\) cannot exceed \(M\). Thus, the amount of all possible sets of service's remaining lifetime at all edge servers cannot exceed \(M^{ns}\). Because a service is either placed or not, the amount of all possible sets of service placement decisions cannot exceed \(2^{ns}\). Similarly, the amount of all possible sets of request offloading decisions cannot exceed \(2^{n^2s}\). Thus, the amount of system states in slot \(t\) cannot exceed \(2^{ns\left ( 1+n \right ) }M^{ns}\). The computation complexity of finding such shortest path via dynamic programming is \(O\left ( 2^{2ns\left ( 1+n \right ) }M^{2ns}T \right )\), which is usually computationally intractable in practice.

\begin{algorithm}[t]
\caption{Centralized Optimal Algorithm}
\label{optimal_alt}
\begin{algorithmic}[1]
\State Construct all feasible states \(\left \{ q_1^t,q_2^t,\cdots  \right \} \) in each slot \(t\in \mathbb{T} \)
\State Construct feasible paths between states in adjacent slots
\State Compute state transition cost, allocate weight to each path
\For{each slot \(t\) from \(T\) to \(0\)}
\For{each state \(q\in\left \{ q_1^t,q_2^t,\cdots  \right \} \)}
\State Define states set \(P_q=\varnothing \)
\If{current slot is the final slot \(T\)}
\State Set \(A^*(q)=0\)
\State Set \(P_q=\left \{ q \right \} \)
\Else{}
\State Set $A^*(q) =$ \par \hspace{2.5em} $\min_{q'\in\left \{ q_1^{t+1},q_2^{t+1},\cdots  \right \} } \left [ D\left ( q,q' \right ) +A^*\left ( q' \right )  \right ]$ 

\State Get state \(q_{opt}^{t+1}=\) \par \hspace{2.5em} \(\arg \min_{q'\in\left \{ q_1^{t+1},q_2^{t+1},\cdots  \right \} } \left [ D\left ( q,q' \right ) +A^*\left ( q' \right )  \right ] \)
\State Set \(P_q=\left \{ q \right \} \cup  P_{q_{opt}^{t+1}}  \)
\EndIf
\EndFor
\EndFor
\State Get the optimal service placement and request offloading decisions according to set \(P_{Q^{0}}\)
\end{algorithmic}
\end{algorithm}

\subsection{Distributed Algorithm}
\label{sec_sta_val}
Due to the high time complexity, the centralized method cannot handle large-scale practical problems. In following discussions, we further design a distributed low-complexity discounted value approximation algorithm (DVA). We first decouple  (P1) in the temporal domain by estimating the future transmission cost, whereby (P1) is transformed into a series of problems (P2). Then, we decouple (P2) in the spatial domain by estimating the request offloading cost among edge servers, whereby each problem (P2) is transformed into multiple problems (P3). Finally, (P3) is addressed on each edge server in each slot to obtain the service placement and request offloading decisions. The approximation ratio of DVA is also analyzed theoretically.

\subsubsection{Temporal and Spatial Value Approximation}
A critical step in the centralized method is to compute \(A^*(Q^{(t)})\), i.e., the minimum future transmission cost given state \(Q^{(t)}\), which is extremely hard due to the enormous amount of states. Therefore, we apply value approximation to estimate the future transmission cost \cite{bertsekas1995dynamic}. Specifically, we compute the upper bound of \(A^*(Q^{(t)})\) to approximate it. Given system state \(Q^{(t)}=\left \{ X^{(t)} ,Y^{(t)}, L^{(t)}\right \} \) in slot \(t\), there is always a feasible strategy in future slots, i.e., keeping the service placement \(X^{(t)}\) and request offloading decisions \(Y^{(t)}\) unchanged. We denote by \(\hat{A} \left ( Q^{(t)} \right ) \) the transmission cost of such strategy after slot \(t\):
\begin{equation}
\label{appr_compu}
\begin{split}
\hat{A} \left ( Q^{(t)} \right ) =& \sum_{i=1}^{s} \sum_{j=1}^{n} \left [ x_{i,j}^{(t)}\beta _i\sum_{a=t}^{T} \varphi \left ( l_{i,j}^{(a)} \right )  \right ]
+ \\& \sum_{i=1}^{s} \sum_{j=1}^{n} \sum_{a=t}^{T} \left [ \left ( 1- x_{i,j}^{(t)}\right ) \lambda_{i,j}^{(a)} d_i\cdot \sum_{k=0}^{n} \gamma _{j,k}y_{i,j,k}^{(t)} \right ].
\end{split}
\end{equation}

Note that we add a factor \(x_{i,j}^{(t)}\) into the refreshing cost, which will not affect the computation (only these placed services have such refreshing cost). The remaining lifetime information of all services in all edge servers can also be obtained directly. According to the definition of \(A^*\left(Q^{(t)}\right)\), we have:
\begin{equation}
\hat{A} \left ( Q^{(t)} \right ) \ge A^*\left(Q^{(t)}\right).
\end{equation}
Thus, \(\hat{A} \left ( Q^{(t)} \right )\) is the upper bound of \(A^*\left(Q^{(t)}\right)\). Consequently, we can utilize it for approximation, and choose the state whose upper bound is lower.

From another perspective, the transmission cost of all future slots are discounted to current slot. Note that the slots in the near future are usually more important in practice, especially considering the accumulative error between the approximation value and actual value of \(A^*\left ( Q^{(t)} \right ) \). Hence, we introduce a temporal approximation factor \(\theta \in [0,1]\) to estimate the future transmission cost:

\begin{equation}
\label{discount_approx}
\begin{split}
\hat{A} \left ( Q^{(t)} \right ) =& \sum_{i=1}^{s} \sum_{j=1}^{n} \left [ x_{i,j}^{(t)}\beta _i\sum_{a=t}^{T} \theta ^{a-t} \varphi \left ( l_{i,j}^{(a)} \right )  \right ]
+ \\& \sum_{i=1}^{s} \sum_{j=1}^{n} \sum_{a=t}^{T} \left [ \theta ^{a-t} \left ( 1- x_{i,j}^{(t)}\right ) \lambda_{i,j}^{(a)} d_i\cdot \sum_{k=0}^{n} \gamma _{j,k}y_{i,j,k}^{(t)} \right ].
\end{split}
\end{equation}

If \(\theta \) is close to 1, more time slots will be taken into consideration; otherwise, the decisions will be made based on slots in the near future. Then, given state \(Q^{(t-1)}\) in slot \(t-1\), we can obtain the service placement and request offloading decisions in slot \(t\) by \(\min_{Q^{(t)}}\left [ D\left ( Q^{(t-1)},Q^{(t)} \right ) + \hat{A}\left ( Q^{(t)} \right )\right ]\), which is given as:
\begin{equation}
\label{temporal}
\begin{split}
(\mathrm{P2} )\ \min & \sum_{i = 1}^{s}\sum_{j = 1}^{n} \left ( 1-x_{i,j}^{(t-1)} \right ) x_{i,j}^{(t)}\alpha _i
+\\& \sum_{i = 1}^{s} \sum_{j = 1}^{n}\left [ x_{i,j}^{(t)}\beta _i\sum_{a = t}^{T}\theta ^{a-t} \varphi \left ( l_{i,j}^{(a)} \right )  \right ]
+ \\& \sum_{i = 1}^{s}\sum_{j = 1}^{n} \sum_{a = t}^{T} \left [ \theta ^{a-t}\left ( 1- x_{i,j}^{(t)}\right ) \lambda_{i,j}^{(a)} d_i\cdot \sum_{k=0}^{n} \gamma _{j,k}y_{i,j,k}^{(t)}  \right ]
\\ \mathrm{s.t.}  & \ (4)(5)(6)(7).
\end{split}
\end{equation}

Problem (P2) is still NP-hard in the general case, where the service placement and request offloading variables are coupled. To further reduce the complexity, problem (P2) is decoupled in spatial domain by using approximation of the request offloading cost (i.e., the spatial value approximation).

In the problem (P2), \(\sum_{k = 0}^{n} \gamma _{j,k}y_{i,j,k}^{(t)}\) represents the traffic coefficient of offloading service \(i\)'s requests from edge server \(j\). Intuitively, if edge server \(j\)'s neighbors have placed service \(i\) in slot \(t-1\), it is likely that the service \(i\) will still be placed in edge server \(j\)'s neighborhood in slot \(t\) (maybe not the same neighbor). Hence, we introduce a spatial approximation factor \(\delta _{i,j}\ge 1\) to estimate the request offloading cost among edge servers:
\begin{equation}
\gamma _{i,j}^{pre}=\delta _{i,j}\gamma _{j}^{i,min},
\end{equation}
where \(\gamma _{j}^{i,min}=\min_{x_{i,k}^{(t-1)}=1, j\ne k} \gamma _{j,k}\). A small \(\delta _{i,j}\) indicates edge server \(j\) is optimistic that its nearest neighbors (edge servers with smallest \(\gamma _{j,k}\)) will place service \(i\). Similarly, a small \(\gamma _{j}^{i,min}\) means \(j\)'s nearest neighbors have cached service \(i\) in last slot.

By introducing the spatial approximation factor, we remove the request offloading variables \(y_{i,j,k}^{(t)}\), thus the service placement decisions of different edge servers are decoupled. Therefore, we can further decompose problem (P2) into \(n\) independent sub-problems, namely, each edge server \(j\) only computes for its own:
\begin{equation}
\begin{split}
\mathrm{(P3)} \ \min_{X^{(t)}} & \sum_{i = 1}^{s} \left ( 1-x_{i,j}^{(t-1)} \right ) x_{i,j}^{(t)}\alpha _i
+\\& \sum_{i = 1}^{s} \left [ x_{i,j}^{(t)}\beta _i\sum_{a = t}^{T} \theta ^{a-t}\varphi \left ( l_{i,j}^{(a)} \right )  \right ]
+ \\& \sum_{i = 1}^{s} \sum_{a = t}^{T} \left [\theta ^{a-t} \left ( 1- x_{i,j}^{(t)}\right ) \lambda_{i,j}^{(a)} d_i\cdot \gamma _{i,j}^{pre} \right ]
\\ \mathrm{s.t.}  & \sum_{i = 1}^{s} x_{i,j}^{(t)}r_i\le R_j.
\end{split}
\end{equation}

Problem (P3) is a 0-1 knapsack problem, which can be solved by dynamic programming with time complexity of \(O(Rs)\), where \(R=\max_{j\in\left \{ 1,\cdots ,n \right \} }R_j\). 

Finally, we present a discounted value approximation algorithm, DVA, which is summarized in Algorithm \ref{DVA}. In DVA, every edge server only needs to solve problem (P3) to obtain the current slot's service placement decision by combining the temporal and spatial approximations. Then these edge servers can exchange their service placement information with their neighbors\footnote{the communication range can be obtained via clustering edge servers or splitting the edge network \cite{DBLP:journals/tmc/NdikumanaTHHSNH20, DBLP:journals/tmc/ChenSZX21}, which is not the focus of this work.} (for edge nodes far away, the request offloading costs may be as expensive as the cloud), and the request offloading strategy can be obtained accordingly. Thus, our method is suitable for large-scale edge networks naturally.

\begin{algorithm}[t]
\caption{Discounted Value Approximation Algorithm}
\label{DVA}
\begin{algorithmic}[1]
\State Initialize state \(Q^{(0)}=\left \{ \mathbf{0} ,\mathbf{0} ,\mathbf{0}  \right \} \)
\State In each slot \(t\in\mathbb{ T} \)
\State \hspace*{0.5cm} Every edge server \(j\in\mathbb{ N} \) solves the problem (P3) and obtains its service placement decisions \(\left \{ x_{1,j}^{(t)},\cdots ,x_{s,j}^{(t)} \right \} \);
\State \hspace*{0.5cm} Every edge server exchanges its service placement information with neighbors;
\State \hspace*{0.5cm} Every edge server \(j\in\mathbb{ N} \) obtains its request offloading decisions \(\left \{ y_{0,j,0}^{(t)},\cdots ,y_{s,j,n}^{(t)} \right \} \);
\State \hspace*{0.5cm} Every edge server \(j\in\mathbb{ N} \) updates its services remaining lifetime information \(\left \{ l_{1,j}^{(t)},\cdots ,l_{s,j}^{(t)} \right \} \).
\end{algorithmic}
\end{algorithm}

\subsubsection{Approximation Ratio Analysis}
\label{appro}
In this part, we analyze the approximation ratio of the DVA algorithm, i.e., the performance gap between DVA and the optimal method. For the convenience of analysis, we assume all services have the same size \(r_i=r,\forall i\in\left \{ 1,\cdots ,s \right \} \), and all services are worthy to be placed, i.e., the cost of offloading service \(i\)'s requests is higher than the placement cost \(\lambda _{i,j}^{(t)}d_i\gamma _{j}^{min}>\alpha _i,\) \(\forall i\in \mathbb{S} , \forall j \in \mathbb{N} , \forall t \in \mathbb{T} \), where \(\gamma _{j}^{min}=\min_{k\ne j } \gamma _{j,k}\). Thus, for edge server \(j\), it will always place \(\left \lfloor R_j/r \right \rfloor\) services. Besides, we set the spatial approximation factor \(\delta _{i,j}=\frac{1}{\gamma _{j}^{i,min}} \) and the temporal approximation factor \(\theta =1.0\). Such settings indicate that the predicted traffic coefficient for offloading requests will always be \(\gamma _{j,0}=1\) for all edge servers.

\begin{lemma}\label{DVA_cost}
The total transmission cost of DVA's solution, \(D_{A}\), satisfies:
\begin{equation}
\begin{split}
D_A\le \sum_{j=1}^{n} \left [ \sum_{i\in u_j}\left ( \alpha _i+\left \lfloor \frac{T}{LF_i}  \right \rfloor \beta _i \right ) 
+ \sum_{i\notin u_j} \sum_{t=1}^{T} \lambda _{i,j}^{(t)}d_i\right ],
\end{split}
\end{equation}
where \(u_j=\arg \max_{\left \lfloor R_j/r \right \rfloor } \sum_{t=1}^{T} \lambda _{i,j}^{(t)}d_i\) represents the top \(\left \lfloor R_j/r \right \rfloor\) services with largest volume of requests offloading data across all slots on edge server \(j\).
\end{lemma}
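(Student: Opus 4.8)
The plan is to exhibit a single feasible reference strategy whose actual cost equals the right-hand side (up to the refresh over-count) and then show that DVA's total cost can never exceed it. First I would simplify the algorithm under the stated hypotheses: since $\delta_{i,j}=1/\gamma_j^{i,min}$ we get $\gamma_{i,j}^{pre}=\delta_{i,j}\gamma_j^{i,min}=1$ for every $i,j$, and with $\theta=1$ the objective of (P3) at any slot $t$ reduces to the undiscounted sum of the one-time placement cost, the refresh cost $\sum_i x_{i,j}^{(t)}\beta_i\sum_{a=t}^{T}\varphi(l_{i,j}^{(a)})$ under static continuation, and the offloading cost $\sum_i(1-x_{i,j}^{(t)})\sum_{a=t}^{T}\lambda_{i,j}^{(a)}d_i$ evaluated at the cloud coefficient $1$. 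The reference strategy is: on each server $j$, place the set $u_j$ in slot $1$ and keep it for all $T$ slots, offloading every other service to the cloud. This is feasible for (P3) because $|u_j|=\lfloor R_j/r\rfloor$ and all files have size $r$; its placement cost is $\sum_{i\in u_j}\alpha_i$, each placed service is refreshed at most $\lfloor T/LF_i\rfloor$ times over the horizon (the lifetime rule in Eq. (1) resets after reaching $0$ with period no smaller than $LF_i$), and the uncached services contribute $\sum_{i\notin u_j}\sum_t\lambda_{i,j}^{(t)}d_i$. Summing over $j$ reproduces exactly the claimed bound.

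Second, and this is the crux, I would bound DVA's \emph{actual} cost by the \emph{estimated} cost of its own slot-$1$ decision through a telescoping argument. Let $C_t$ denote the actual cost incurred in slot $t$ (summed over servers, using DVA's real placements and genuine offloading targets), let $g_t$ denote the (P3) objective at slot $t$, and define the running quantity $\mathrm{Est}_t=\sum_{\tau<t}C_\tau+g_t\!\left(X_{\mathrm{DVA}}^{(t)}\right)$, i.e. the cost already paid plus DVA's current estimate of the cost-to-go. I would show $\mathrm{Est}_{t+1}\le\mathrm{Est}_t$. Two ingredients drive this. (i) Re-optimization cannot raise the estimate: keeping the previous placement $X^{(t)}$ at slot $t+1$ is a feasible candidate for (P3) with zero new placement cost, so $g_{t+1}\!\left(X_{\mathrm{DVA}}^{(t+1)}\right)\le g_{t+1}\!\left(X^{(t)}\right)$, and by construction $g_t\!\left(X^{(t)}\right)-g_{t+1}\!\left(X^{(t)}\right)$ equals the slot-$t$ placement-plus-refresh-plus-cloud-offload estimate, which I call $\tilde C_t$. (ii) The realized slot-$t$ cost never exceeds its cloud-based estimate, $C_t\le\tilde C_t$: the placement and refresh terms coincide because the slot-$t$ lifetimes and placement used in the estimate are the actual ones, while the offloading term can only shrink since DVA offloads to the cheapest available neighbor and every $\gamma_{j,k}\le\gamma_{j,0}=1$. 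Combining (i) and (ii) gives $\mathrm{Est}_{t+1}-\mathrm{Est}_t\le C_t-\tilde C_t\le 0$.

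Finally I would close the chain. Since $\mathrm{Est}_t$ is non-increasing, $D_A=\sum_{t=1}^{T}C_t\le\mathrm{Est}_T\le\mathrm{Est}_1=g_1\!\left(X_{\mathrm{DVA}}^{(1)}\right)$, where the first inequality uses $C_T\le\tilde C_T=g_T\!\left(X_{\mathrm{DVA}}^{(T)}\right)$ at the last slot (there the estimated cost-to-go contains only slot $T$). Because DVA minimizes (P3) and the reference placement $u_j$ is feasible, $g_1\!\left(X_{\mathrm{DVA}}^{(1)}\right)$ is at most the slot-$1$ objective of placing $u_j$ on every server, which is exactly the right-hand side after replacing the true refresh count by its upper bound $\lfloor T/LF_i\rfloor$. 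This yields the stated inequality.

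The step I expect to be the main obstacle is establishing $C_t\le\tilde C_t$ rigorously, because it requires showing that the refresh bookkeeping at the current slot is identical in the estimate and in reality even though the estimate is computed under a counterfactual static continuation: the remaining lifetime $l_{i,j}^{(t)}$ entering slot $t$ is fixed by the realized history, so the slot-$t$ indicator $\varphi(l_{i,j}^{(t)})$ and placement $x_{i,j}^{(t)}$ agree between estimate and reality, and the only slack comes from the offloading coefficient. Care is also needed to verify the period argument bounding the number of refreshes by $\lfloor T/LF_i\rfloor$ directly from the piecewise reset in Eq. (1), and to confirm that the per-server decomposition of (P3) is consistent with the global accounting used for $C_t$.
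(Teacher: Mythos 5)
Your proposal is correct and takes essentially the same route as the paper: the paper's chain $D(f_1,\dots,f_t,f_t,\dots,f_t)\le D(f_1,\dots,f_{t-1},f_{t-1},\dots,f_{t-1})$, followed by $D_A\le D(f_1,\dots,f_T)$ (actual offloading beats cloud-only) and $D(f_1,\dots,f_1)\le D_S$, is exactly your telescoping of $\mathrm{Est}_t$ together with $C_t\le\tilde C_t$ and the feasibility of the reference placement $u_j$ for (P2)/(P3). Your write-up is just a more explicit formalization of the same argument (one cosmetic note: the justification for $C_t\le\tilde C_t$ should be that the cloud at coefficient $\gamma_{j,0}=1$ is always an available offloading target, not that every $\gamma_{j,k}\le 1$, which the paper's simulations show need not hold).
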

\begin{proof}
In the first slot, DVA will make service placement decisions \(f_{1}=\left \{ x_{1,1}^{(1)},\cdots ,x_{s,n}^{(1)}\right \}\) for all edge servers. We denote by \(D\left ( f_{1},f_{1},\cdots ,f_{1} \right )\) as the total transmission cost of keeping \(f_{1}\) unchanged (and edge servers only offload requests to the cloud) in all following slots. Next, in the second slot, DVA will make placement decisions \(f_{2}\). According to the mechanism of DVA, the cost \(D\left ( f_{1},f_{2},\cdots ,f_{2} \right )\) of keeping \(f_{2}\) unchanged after slot 2 can not exceed \(D\left ( f_{1},f_{1},\cdots ,f_{1} \right )\). Otherwise, \(f_{1}\) will not be replaced by \(f_{2}\). Similarly, in slot 3, \(D\left ( f_{1},f_{2},f_{3},\cdots ,f_{3} \right )\le D\left ( f_{1},f_{2},\cdots ,f_{2} \right )\). Finally, we will have \(D\left ( f_1,\cdots ,f_T \right ) \le D\left ( f_1,\cdots ,f_1 \right ) \). Note that all the above costs correspond to the case that user requests are all offloaded to the cloud. However, edge servers will offload requests to their neighbors in practice (if possible). Thus, the actual cost of DVA, \(D_A\), satisfies:
\begin{equation}
D_A\le D\left ( f_1,\cdots ,f_T \right ) \le D\left ( f_1,\cdots ,f_1 \right ) .
\end{equation}

Suppose the services in set \(u_j=\arg \max_{\left \lfloor R_j/r \right \rfloor } \sum_{t=1}^{T} \lambda _{i,j}^{(t)}d_i\) are placed on edge server \(j\) in all slots \(\left \{ 1,\cdots ,T \right \} \), and other requests are offloaded to the cloud. The total transmission cost of such strategy will be \(D_S\):
\begin{equation}
D_S=\sum_{j=1}^{n} \left [ \sum_{i\in u_j}\left ( \alpha _i+\left \lfloor \frac{T}{LF_i}  \right \rfloor \beta _i \right ) 
+ \sum_{i\notin u_j} \sum_{t=1}^{T} \lambda _{i,j}^{(t)}d_i\right ].
\end{equation}
Given that \(\theta =1\) and \(f_{1}\) is the solution of problem (P2), we have:
\begin{equation}
D\left ( f_{1},f_{1},\cdots ,f_{1} \right )\leqslant D_S.
\end{equation}
Thus, Lemma \ref{DVA_cost} is proved.
\end{proof}

\begin{lemma}\label{OPT_cost}
The total transmission cost of the optimal solution, \(D_O\), satisfies:
\begin{equation}
D_O\ge \sum_{j=1}^{N} \sum_{t=1}^{T} \sum_{i \notin v_j^{(t)}}\lambda _{i,j}^{(t)}d_i\gamma _{j}^{min} ,
\end{equation}
where \(v_j^{(t)}=\arg \min_{\left \lfloor R_j/r \right \rfloor }\lambda _{i,j}^{(t)}d_i\) represents the top \(\left \lfloor R_j/r \right \rfloor\) services with largest volume of requests offloading data in slot \(t\) on edge server \(j\).
\end{lemma}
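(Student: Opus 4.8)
The plan is to obtain the lower bound by discarding the two nonnegative cost components that do not appear in the target expression and then estimating the remaining offloading cost separately on each edge server and in each slot. First I would note that in the objective $D_O=\sum_{t}\sum_{j}\bigl(D_j^{\mathrm{pla}}+D_j^{\mathrm{ref}}+D_j^{\mathrm{off}}\bigr)$ every term is nonnegative, so evaluating at the optimal decisions and dropping the placing and refreshing costs yields
\[
D_O\ge \sum_{t=1}^{T}\sum_{j=1}^{n}\sum_{i=1}^{s}\bigl(1-x_{i,j}^{(t)}\bigr)\lambda_{i,j}^{(t)}d_i\sum_{k=0}^{n}y_{i,j,k}^{(t)}\gamma_{j,k},
\]
which reduces the task to lower-bounding the optimal offloading cost alone.

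Second, I would bound the traffic coefficient from below. Whenever a service $i$ is not placed on server $j$ in slot $t$, i.e. $x_{i,j}^{(t)}=0$, constraints (5) and (6) force its requests onto some target $k\neq j$ with $x_{i,k}^{(t)}=1$ (the cloud $k=0$ always qualifies). Since $\gamma_{j,k}\ge\gamma_j^{min}=\min_{k\neq j}\gamma_{j,k}$ for every admissible target and $\sum_k y_{i,j,k}^{(t)}=1$, each offloaded service contributes at least $\lambda_{i,j}^{(t)}d_i\gamma_j^{min}$. Hence the right-hand side above is at least $\sum_{t}\sum_{j}\sum_{i:\,x_{i,j}^{(t)}=0}\lambda_{i,j}^{(t)}d_i\gamma_j^{min}$.

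The final and central step is to minimize, for each fixed $(j,t)$, the sum $\sum_{i:\,x_{i,j}^{(t)}=0}\lambda_{i,j}^{(t)}d_i$ over all feasible placements. Because every service shares the common size $r$, the storage constraint (4) permits at most $\lfloor R_j/r\rfloor$ placed services, so the unplaced index set always has cardinality at least $s-\lfloor R_j/r\rfloor$. As the weights $\lambda_{i,j}^{(t)}d_i$ are nonnegative, the sum over any index set of that size or larger is no smaller than the sum of the $s-\lfloor R_j/r\rfloor$ smallest weights, which is exactly $\sum_{i\notin v_j^{(t)}}\lambda_{i,j}^{(t)}d_i$ with $v_j^{(t)}$ the top $\lfloor R_j/r\rfloor$ services by offloading volume. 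Multiplying by the nonnegative factor $\gamma_j^{min}$ and summing over $j$ and $t$ then gives the claimed bound.

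The only point requiring care---really the crux---is this last exchange argument: one must verify that replacing whatever set the optimal solution happens to place on $(j,t)$ by $v_j^{(t)}$ can only decrease the offloaded volume, even if the optimum places fewer than $\lfloor R_j/r\rfloor$ services. I would settle it directly from nonnegativity, since by definition the complement of $v_j^{(t)}$ collects the $s-\lfloor R_j/r\rfloor$ smallest values of $\lambda_{i,j}^{(t)}d_i$ and therefore attains the minimum possible sum among all index sets of size at least $s-\lfloor R_j/r\rfloor$. Everything else is a routine chain of term-by-term inequalities.
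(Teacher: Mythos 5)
Your proposal is correct and follows essentially the same route as the paper: both arguments reduce $D_O$ to the offloading component, bound every traffic coefficient below by $\gamma_j^{min}$, and observe that under the common service size $r$ the set of unplaced services on each $(j,t)$ must contain at least $s-\lfloor R_j/r\rfloor$ elements, so its offloaded volume is at least that of the complement of $v_j^{(t)}$. The only difference is one of exposition: the paper asserts the per-slot inequality $D_j^{(t)}\ge \sum_{i\notin v_j^{(t)}}\lambda_{i,j}^{(t)}d_i\gamma_j^{min}$ directly from Eq.\ (\ref{trans_cost_j}), whereas you supply the term-by-term justification, including the exchange argument that the paper leaves implicit.
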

\begin{proof}
We use \(D_{ignore}\) to denote the total transmission cost of placing services \(v_1^{(1)}\) in slot 1 on edge server 1, \(v_2^{(1)}\) in slot 1 on edge server 2, etc. Besides, we ignore the service placement and refreshing cost, and every edge server can offload requests to its nearest neighbor, therefore:
\begin{equation}
D_{ignore}=\sum_{j=1}^{N} \sum_{t=1}^{T} \sum_{i \notin v_j^{(t)}}\lambda _{i,j}^{(t)}d_i\gamma _{j}^{min} .
\end{equation}

According to Eq. (\ref{trans_cost_j}):
\begin{equation}
D_j^{(t)}\ge \sum_{i \notin v_j^{(t)}}\lambda _{i,j}^{(t)}d_i\gamma _{j}^{min}.
\end{equation}
Since there are additional service placement and refreshing costs in the optimal solution, we have:
\begin{equation}
D_O=\sum_{j=1}^{N} \sum_{t=1}^{T}D_j^{(t)}> D_{ignore}.
\end{equation}
\end{proof}
According to Lemmas \ref{DVA_cost} and \ref{OPT_cost}, the approximation ratio of DVA satisfies:
\begin{equation}
\label{ar}
\begin{split}
\frac{D_A}{D_O}&< \frac{D_S}{D_{ignore}} \\& \leqslant \frac{A+\frac{1}{T}\sum_{j=1}^{N} \sum_{t=1}^{T} \sum_{i \notin u_j}\lambda _{i,j}^{(t)}d_i}{\frac{1}{T}\sum_{j=1}^{N} \sum_{t=1}^{T} \sum_{i \notin v_j^{(t)}}\lambda _{i,j}^{(t)}d_i\gamma _{j}^{min}} ,
\end{split}
\end{equation}
where \(A=\sum_{j=1}^{n} \sum_{i\in u_j} \left ( \frac{\alpha _i}{T} +\frac{\beta _i}{LF_i}  \right ) \). The analysis result indicates that the approximation ratio of DVA is small when the popularity of services is stable across the whole time window, i.e., the service set \(u_j\) is similar with \(v_j^{(t)}, \forall t\).

\section{Performance Evaluation}
\label{sec_simulation}

In this section, we evaluate the performance DVA thoroughly. Firstly, the approximation ratio of DVA (gap between DVA and the optimal results) is evaluated in a small-scale system. Then, the performance of DVA is compared with the state-of-the-art methods in extensive real-trace simulations and 5G testbed experiments. In following experiments, the length of one time is set as one minute.

Unless otherwise stated, the image file size \(r_{i}\) of each service is drawn from \([1,3]\) GB randomly, and the data volume of service placement, \(\alpha_{i}\) equals \(r_{i}\). The data volume of offloading one request for service \(i\), \(d_{i}\), is randomly drawn from \([0.05, 0.1]\) times of \(r_{i}\). The data volume of refreshing, \(\beta_{i}\), is uniformly generated from \([0.5, 0.8]\) times of \(r_{i}\). The lifetime \(LF_{i}\) follows an exponential distribution with a mean value of 4 minutes. Every edge server covers 1000 users, where the user requests are generated randomly according to Zipf distribution with shape parameter 0.6. Besides, the probability of re-ranking the popularity of services is 0.3 in each slot, which reflects the practical dynamic services popularity. The predicted request information is not always accurate. Hence, the prediction error is introduced in our simulations and testbed experiments, i.e., the predicted request number is randomly drawn from \([0.7, 1.3]\) times of the actual value.

The DVA algorithm is compared with the following baselines:
\begin{enumerate}
\item \textbf{Popular}: In each slot, services receiving the largest number of requests (\(\lambda_{i,j}^{(t)}\)) are placed on edge server \(j\).
\item \textbf{Greedy}: In each slot, services with the largest ratio (\(\lambda_{i,j}^{(t)}d_{i}/r_{i}\)), i.e., ratio of offloading all requests' data volume to service image size, are placed on edge server \(j\).
\item \textbf{Relaxation-rounding (RR)}: We relax the binary constraints in the TRSP problem, assume every service brings \(\beta_i/LF_i\) refreshing data in each slot, and every edge server can offload requests to the nearest neighbor. We use the Gurobi optimizer to solve the relaxed problem, where the solution is further rounded into a feasible solution for the TRSP problem.
\item \textbf{Optimal}: The optimal solution obtained by the centralized algorithm.
\end{enumerate}

The Greedy, Popular and RR methods have been utilized in recent service placement studies. Specifically, the Greedy method can achieve (\(1-1/e\))-optimal when the optimization problem is submodular and monotone \cite{DBLP:conf/icdcs/0001KWPS18, DBLP:conf/infocom/0002ZHL021,DBLP:journals/tmc/MoubayedSHLB21}, while the RR method can also obtain near-optimal solutions with a high probability \cite{DBLP:conf/infocom/PoularakisLTTT19,mitzenmacher2017probability,DBLP:conf/mobihoc/PoularakisLTT20,DBLP:conf/infocom/XuZCLXZ20}. Hence, these algorithms are adopted as the state-of-the-art baselines \footnote{Other (meat-)heuristic algorithms or reinforcement learning methods are not chosen due to the lack of theoretical performance guarantee.}.

\subsection{Comparison with Optimal Results}
\label{com_opt}

\begin{figure}[t]
\centering
\includegraphics[width=\linewidth]{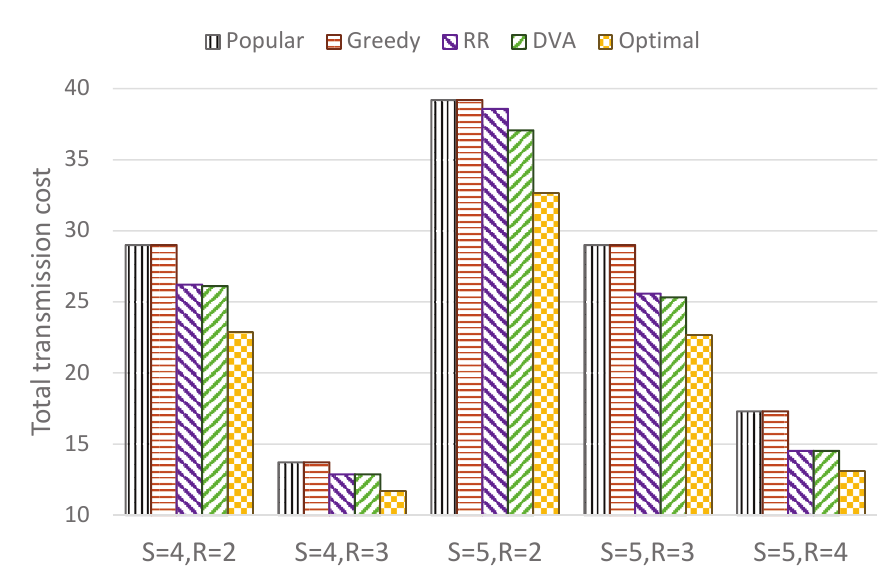}
\caption{Comparison with optimal results.}
\label{optimal}
\end{figure}

In this part, we compare DVA with the optimal results and evaluate its approximation ratio, i.e., the ratio of DVA's transmission cost to the optimal cost. However, the centralized algorithm's exponential complexity makes it extremely hard to obtain multiple edge servers' service placement decisions. For example, in a system with only two edge servers and two services, if \(LF_{1}=LF_{2}=1\), there will be \(2^{16}\) states in each slot, and the complexity of obtaining the optimal solution is \(O(2^{32}T)\). Therefore, we consider a system with only 1 edge server, 5 services, and 10 slots. In such scenario, it can be envisioned that edge servers only offload requests to the cloud, i.e., \(\gamma _{j,k}=1,\forall j\in \mathbb{N} ,\forall k\in \left \{ 0,\cdots ,n \right \} \). We set \(r_i =1\) GB and \(LF_i=2\) minutes for all services. We set the temporal approximation factor \(\theta=0.5\) and the spatial approximation factor \(\delta=1.0\) in DVA. Fig. \ref{optimal} lists the comparison results. The horizontal axis denotes the services number \(S\) and edge server storage \(R\). As can be observed, DVA always performs better than all the baselines and is close to the optimal results.

According to section \ref{appro}, the analytical ratio of DVA's cost to the optimal should be in [1.0, 1.39], and the actual ratio in the above simulations ranges from 1.10 to 1.14. As to the scenario with multiple edge servers, we can obtain the lower bound of the optimal solution according to Lemma \ref{OPT_cost}. The theoretical ratio of DVA to the lower bound should be in [1.0, 4.15], while the actual ratio is about 3.21. In conclusion, the theoretical analysis of DVA's approximation ratio is validated by the simulation results. In the above simulations, the performance of RR method is close to DVA. The main reason is that the relaxation strategy in RR suits well in small scale scenarios. However, in large-scale simulations, we find that RR performs much worse.

\begin{table}[]
\centering
\caption{Running time (in milliseconds) of different methods.}
\label{running time}
\begin{tabular}{@{}cccc@{}}
\toprule
Services number & DVA   & Greedy/Popular & RR      \\ \midrule
100             & 25.6  & 0.8            & 1561.9  \\
500             & 125.3 & 4.9            & 8169.1  \\
1000            & 249.5 & 9.4            & 16432.9 \\ \bottomrule
\end{tabular}
\end{table}

Next, the running time of different methods is analyzed in table \ref{running time}. All the simulations are implemented on a machine with i5-8400 CPU and 16 GB RAM. As can be seen, DVA's running time scales linearly with the services number. Specifically, DVA can handle 1000 services within 250 ms, i.e., 4‰ of a time slot, which indicates that DVA is suitable for large-scale scenarios. DVA and the Greedy/Popular method can generate the service placement decisions in a slot-by-slot manner, while the RR method needs to compute for all future slots simultaneously. Thus, the running time of RR is significantly higher than other methods.

\subsection{Performance Comparison and Analysis}
\label{pere}
In this part, we set 16 edge servers that form a \(4\times 4\) grid, and the traffic coefficient between edge servers is given by \(\gamma_{j,k}=0.3\times dis(j,k)\), where \(dis(j,k)\) represents the hops between \(j\) and \(k\). Hence, the cost of offloading requests to a 4-hop away node will be higher than that to the cloud. For simplicity, all edge servers have equal storage size, i.e., \(R_j=100\) GB, \(\forall j\in \left \{ 1,\cdots ,n \right \}\). We consider 100 services and 20 time slots in the following simulations.

\begin{figure}[t]
\centering
\includegraphics[width=\linewidth]{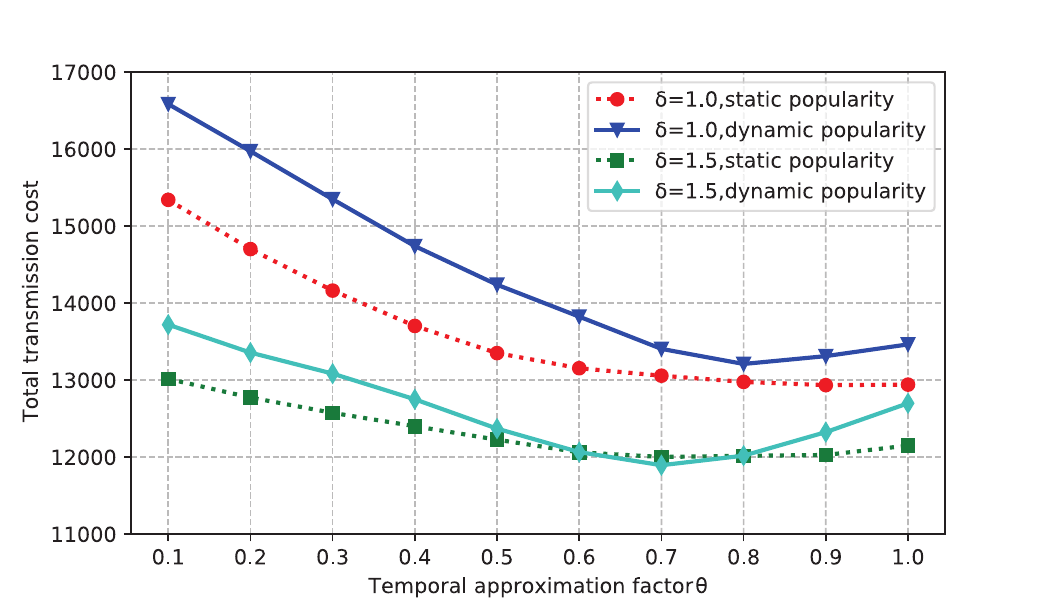}
\caption{Impact of the temporal approximation factor \(\theta\).}
\label{temporal factor}
\end{figure}

We first analyze the impact of the temporal approximation factor \(\theta\) and spatial approximation factor \(\delta_{i,j}\) in DVA. Recall that the transmission cost in the future slots is discounted to the current slot by using $\theta$. Besides, the cost of offloading requests among edge servers is estimated by $\delta$, which represents the unit cost of offloading one GB of data. By adjusting $\delta$ and $\theta$, we can predict the future transmission cost relatively accurately. Then, the optimal strategy in the current slot can be obtained by comparing the predicted transmission cost.

For simplicity, all edge servers share a fixed spatial approximation factor \(\delta_{i,j}=\delta,\forall i\in \{ 1,\cdots ,s \},\) \(\forall j\in \{ 1,\cdots ,n \}\). Fig. \ref{temporal factor} demonstrates the impact of different temporal approximation factors. Note that the \textit{static popularity} corresponds to the unchanged popularity rank of services across all slots, and the \textit{dynamic popularity} represents dynamic service popularity as mentioned above. Unless otherwise stated, we consider the dynamic popularity scenario in the following simulations since it can reflect the practical applications. As can be seen, with static popularity, large temporal approximation factors always perform better, which means that considering more slots can significantly reduce the cost. However, with dynamic popularity, the transmission cost will increase with too large or small temporal approximation factors.

\begin{figure}[t]
\centering
\includegraphics[width=\linewidth]{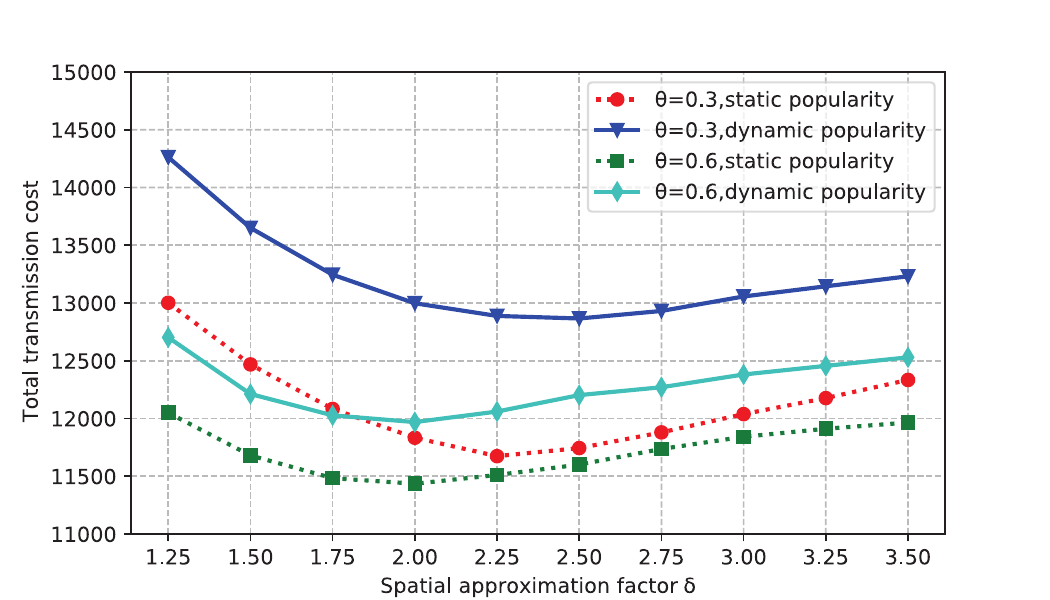}
\caption{Impact of the spatial approximation factor \(\delta\).}
\label{spatial factor}
\end{figure}

The impact of the spatial approximation factor \(\delta\) is illustrated in Fig. \ref{spatial factor}. The total transmission cost decreases at first and then increases when the factor is large enough. The main reason is that a small \(\delta\) represents the edge server is quite confident that its requests can be offloaded to the nearest neighbors, which may deviate significantly from reality. In contrast,  the possibility of offloading requests is much higher with a large \(\delta\), which can approximate the actual state better. However, when the factor is large enough, the predicted request offloading cost will also deviate from reality. Therefore, we set \(\theta=0.6\) and \(\delta=2.0\) in following simulations.

\begin{figure}[t]
\centering
\includegraphics[width=\linewidth]{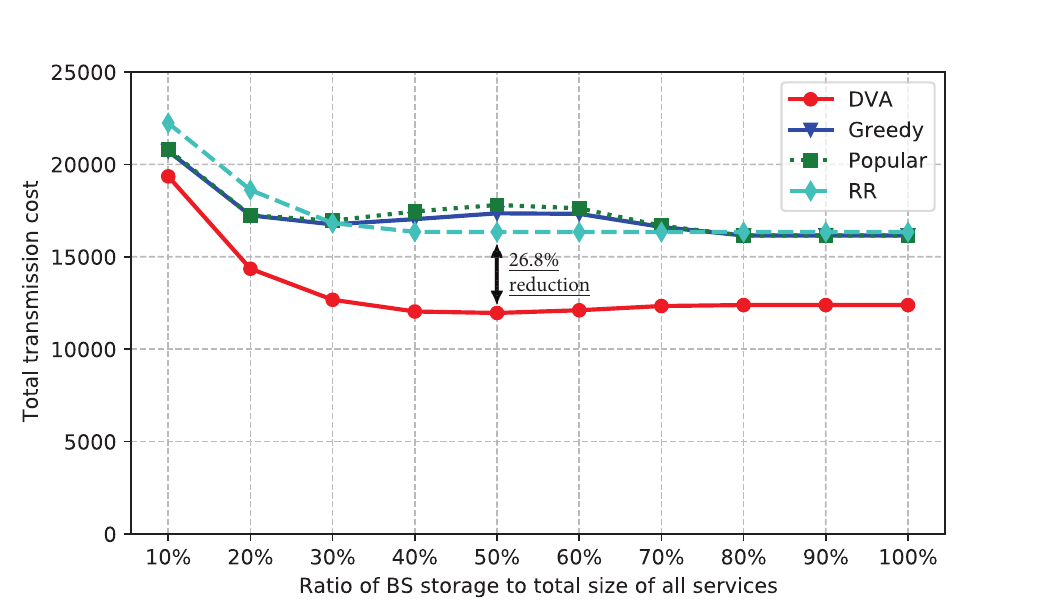}
\caption{Impact of edge servers' storage capacity.}
\label{storage 1}
\end{figure}

The impact of edge server storage is shown in Fig. \ref{storage 1}, where the horizontal axis represents the ratio of edge server storage to the total size of all service image files. The proposed DVA algorithm consistently outperforms the baseline methods and can save 26.8\% transmission cost when the ratio is 50\%. For DVA, the total cost decreases with the  storage capacity and then reaches a relatively stable value. The main reason is that more services can be placed at the network edge with larger storage. Thus, more user requests can be handled by edge servers, which reduces the total cost. However, when the storage is large enough, DVA tends to offload some unpopular services' requests other than place these services. The RR method can also achieve a stable value. Nevertheless, RR focuses on a simplified problem, and its solution fits not well in the original TRSP problem and leads to an increased cost.

\begin{figure}[t]
\centering
\includegraphics[width=\linewidth]{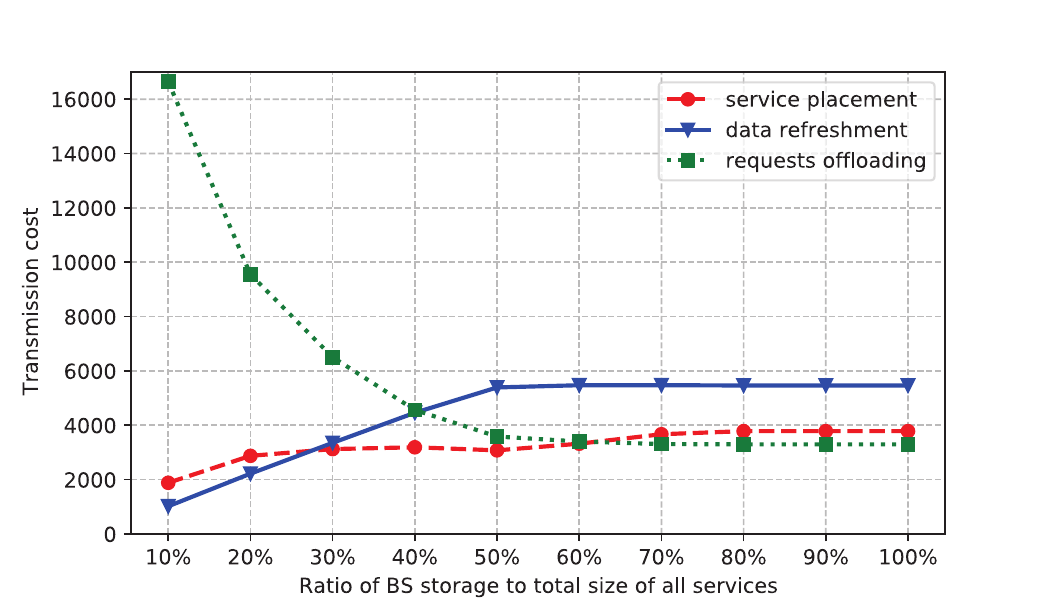}
\caption{Components in the total transmission cost of DVA.}
\label{DVA components}
\end{figure}

\begin{figure}[t]
\centering
\includegraphics[width=\linewidth]{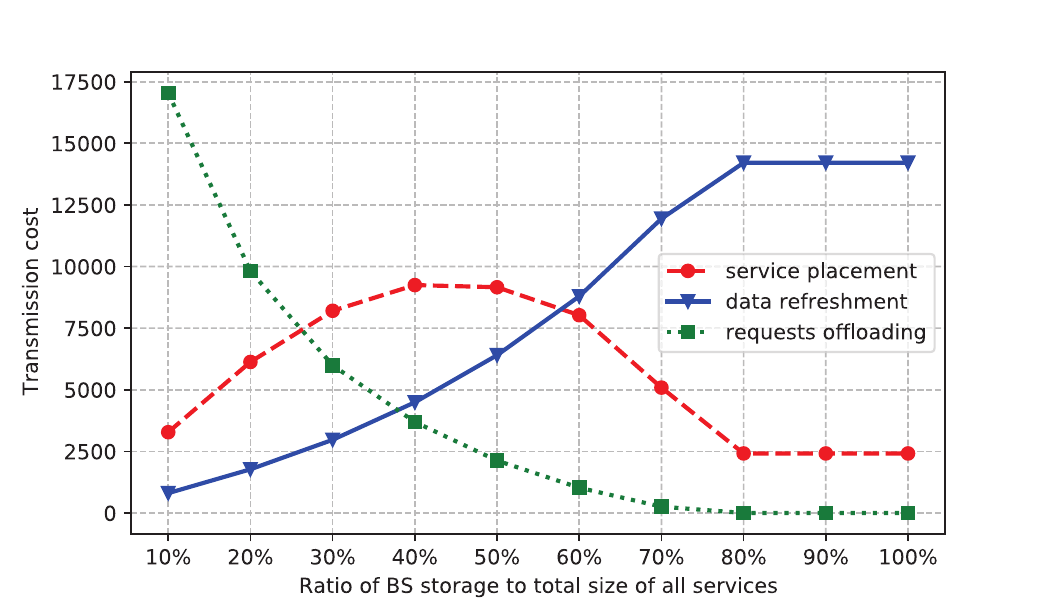}
\caption{Components in the total transmission cost of the greedy method.}
\label{greedy components}
\end{figure}

The main reason for the greedy (popular) method's bad performance is that it tends to place all services. Due to the fluctuating service popularity, the greedy (popular) method frequently places new services; thus, the gain of serving requests at the edge is eliminated by additional service placement cost, which leads to an increased total cost when the ratio is between 30\% and 70\%. When the storage is large enough, most services will be placed at edge servers; thus, the service placement cost is replaced by refreshing cost, which leads to a smaller total transmission cost. We list the components of the total transmission cost of DVA and the greedy method in Fig. \ref{DVA components} and Fig. \ref{greedy components}. These two figures can verify the above analysis.

\begin{figure}[t]
\centering
\includegraphics[width=\linewidth]{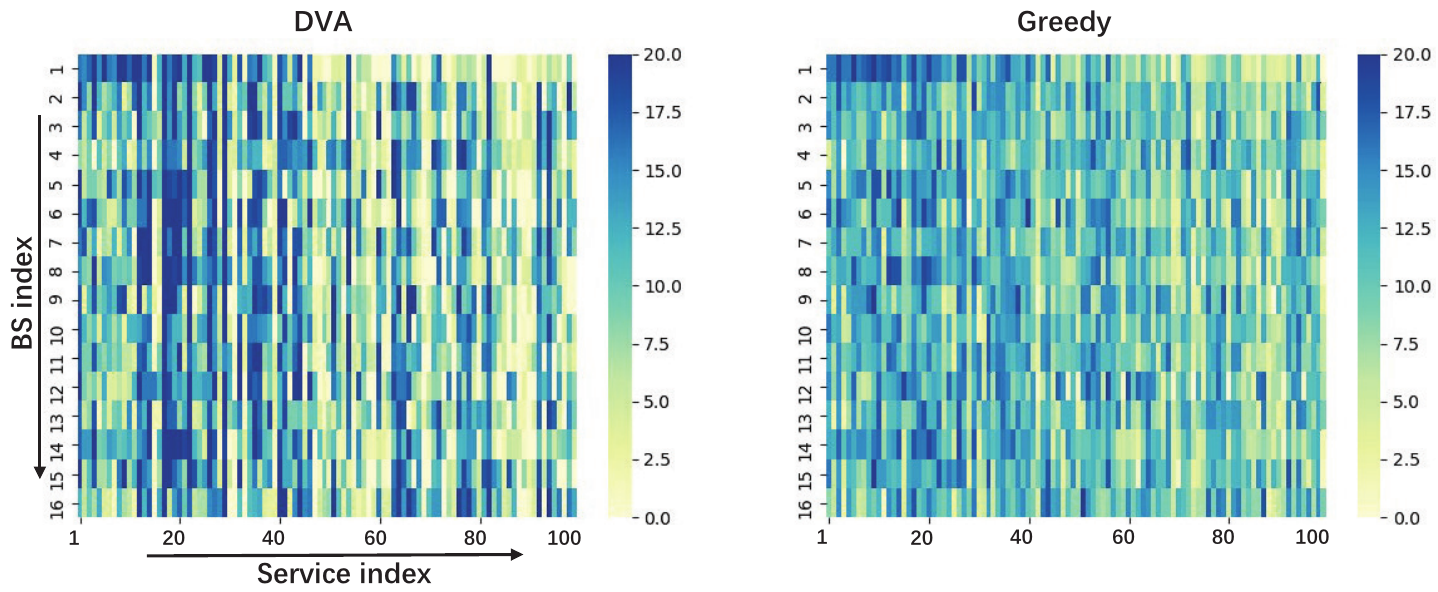}
\caption{Service placement decisions of DVA and the greedy method across 20 slots, point with light color indicates the service (column) is rarely placed on the edge server (row).}
\label{heatmap}
\end{figure}

Fig. \ref{heatmap} demonstrates the service placement decisions of DVA and the greedy method across all 20 slots. If a service \(i\) is placed in all slots on edge server \(j\), the (\(j, i\))-th point in the figure will be characterized with the deepest color. As can be observed, the color distribution of the greedy method is much more even than DVA. Such phenomenon indicates the greedy method will frequently place new popular services, bringing higher transmission cost. In contrast, DVA prefers to keep the service placement decisions unchanged in the long term, which achieve the balance among service placement, refreshing and offloading cost.

\begin{figure}[t]
\centering
\includegraphics[width=\linewidth]{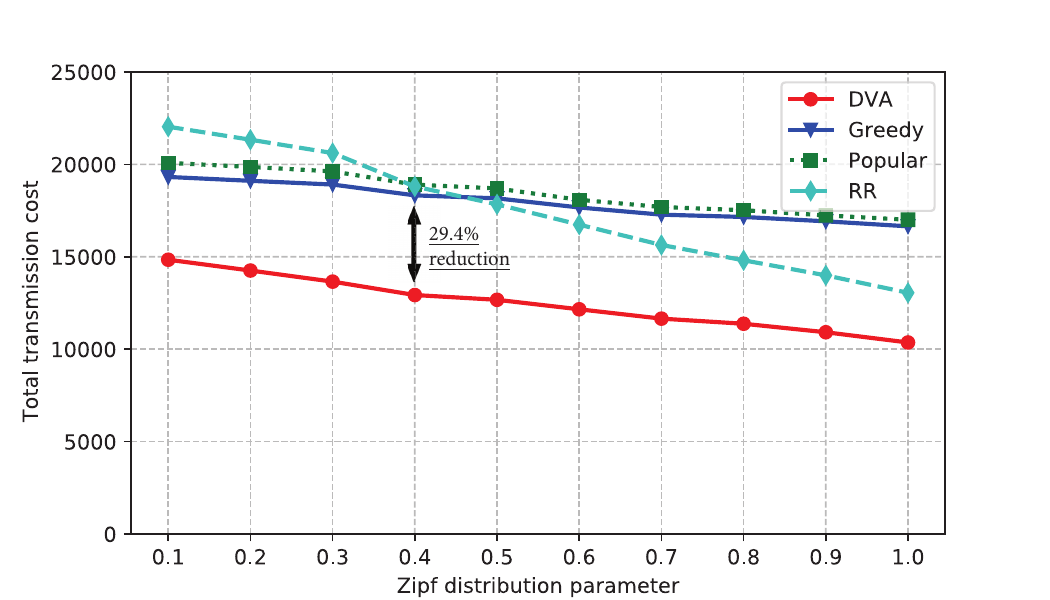}
\caption{Impact of Zipf distribution parameter.}
\label{zipf 1}
\end{figure}

The influence of user request concentration is also studied by varying the Zipf distribution parameter, as shown in Fig. \ref{zipf 1}. The total transmission cost is shown to decrease with the increasing parameters in all methods. In this case, user requests are more concentrated, and the most popular services will be placed to achieve a high hit rate. The DVA algorithm also outperforms the baselines with up to 29.4\% transmission cost reduction.

\subsection{Real-Trace Simulations}
\label{rts}
In this part, we consider 4 edge servers located alongside the Shanghai Yan An Elevated Road. The easternmost edge server is located at Jing'an Temple Square (31.22N, 121.45E), and the distance between adjacent edge servers is 2000 m. The coverage radius of each edge server is 1000 m. All edge servers have equal storage size, \(R_j=80\) GB, \(\forall j\in \left \{ 1,\cdots ,4 \right \}\). The real taxi trace data is used for simulation \cite{DBLP:conf/kdd/LiuLNFL10}. The trace data is updated every minute and consists of taxi ID, longitude, latitude, velocity, driving direction. The number of taxis in coverage is around 52 taxis per minute per edge server on average. Each taxi generates 12 requests per minute. The system consists of 100 services and 60 slots. We set \(\theta=0.6\) and \(\delta=2.0\) in following simulations. Other settings are kept the same with section \ref{pere}.

\begin{figure}[t]
\centering
\includegraphics[width=\linewidth]{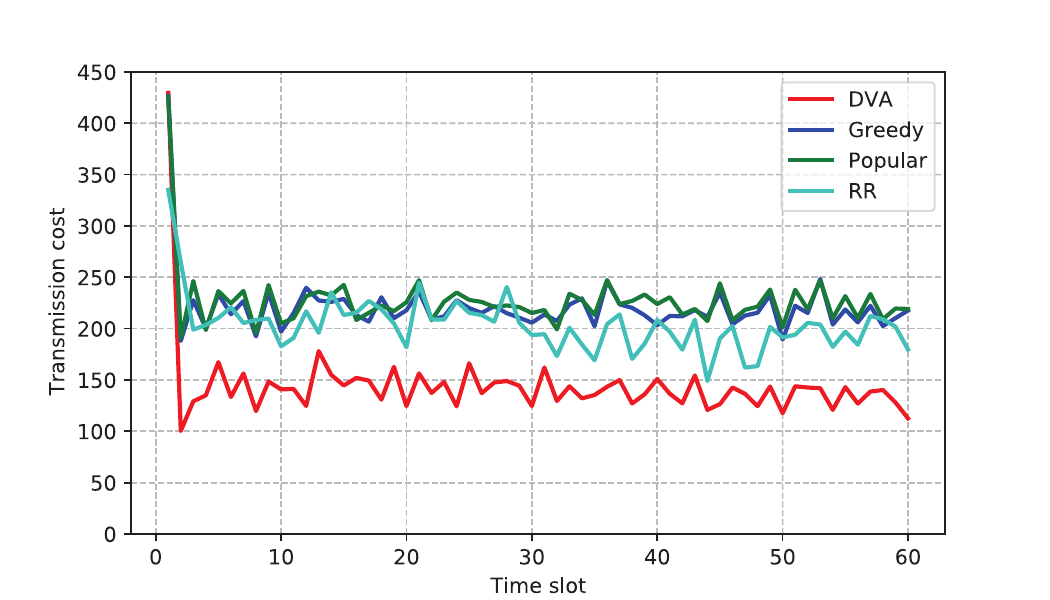}
\caption{Transmission cost of different time slots in real-trace simulations.}
\label{slot cost}
\end{figure}

Firstly, we analyze the transmission cost in different slots of all methods in Fig. \ref{slot cost}. As can be observed, DVA always performs better than other methods across all time slots. Note that all methods have a high cost in the first slot because all new services should be placed in the first slot.

\begin{figure}[t]
\centering
\includegraphics[width=\linewidth, trim=0 0 0 30, clip]{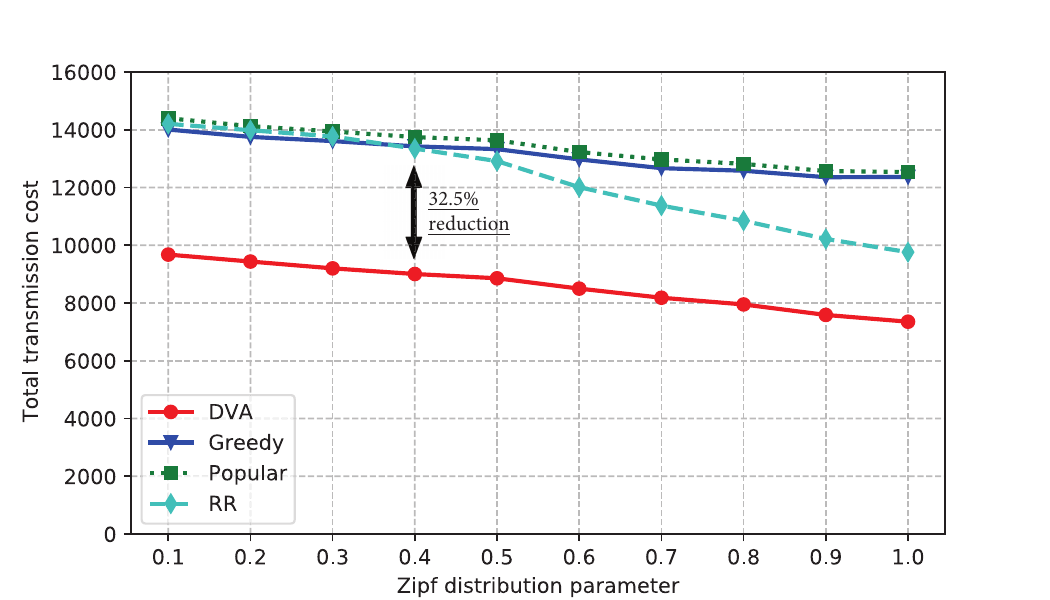}
\caption{Impact of Zipf distribution parameter in real-trace simulations.}
\label{zipf 2}
\end{figure}

Next, we explore the impact of the Zipf distribution parameter. Fig. \ref{zipf 2} shows the results, which is similar to Fig. \ref{zipf 1}. DVA consistently outperforms the other methods with up to 32.5\% transmission cost reduction, which shows the good performance of the DVA algorithm.

\subsection{5G-testbed Experiments}

\begin{figure}[t]
\centering
\includegraphics[width=\linewidth]{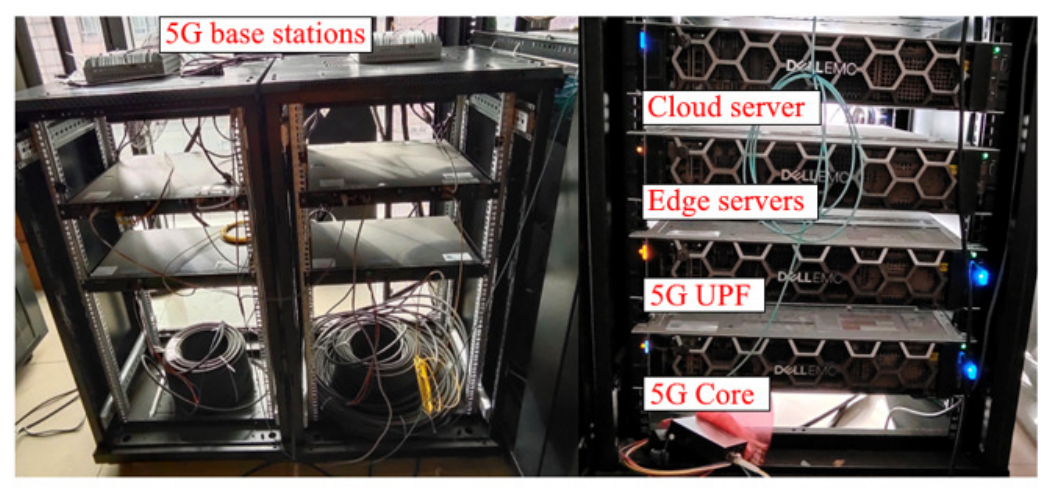}
\caption{5G service placement testbed.}
\label{5gt}
\end{figure}

In this part, we evaluate DVA's performance using a 5G testbed. As shown in Fig. \ref{5gt}, the testbed consists of two 5G white-box base stations, two edge servers and one cloud server. The edge/cloud servers are built on Dell PowerEdge R740. Service images and containers in edge/cloud servers are managed using Docker \footnote{https://www.docker.com/}. Placed services are registered using ZooKeeper \footnote{https://zookeeper.apache.org/}. User requests received by base stations are forwarded to edge/cloud servers by the 5G User Plane Function (UPF).

\begin{figure}[t]
\centering
\includegraphics[width=0.9\linewidth]{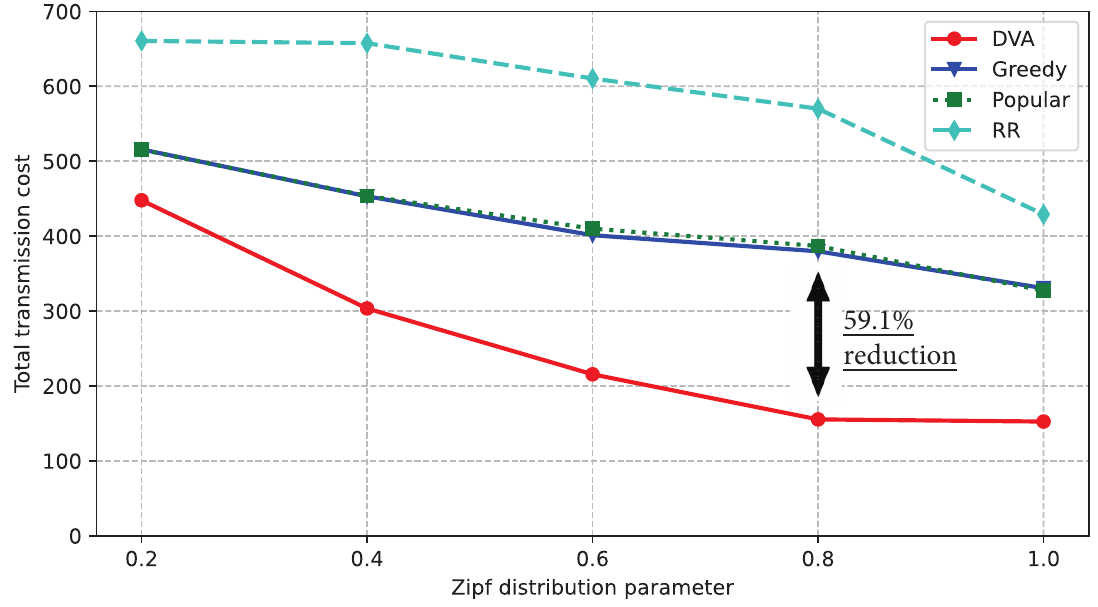}
\caption{Impact of Zipf distribution parameter in 5G testbed experiments.}
\label{tzipf}
\end{figure}

Due to the throughput limitations of the network interface cards in our testbed, we set the mean service image file size as 0.3 GB. All edge servers have equal storage size, \(R_j=15\) GB. We consider a system of 100 services and 20 slots, the lifetime \(LF_{i}\) is 1 minute for all services. Other settings are kept the same with section \ref{pere} and \ref{rts}.

The impact of the Zipf distribution parameter is shown in Fig. \ref{tzipf}. As can be seen, DVA outperforms all baseline methods with up to 59.1\% transmission cost reduction, which shows the usability and superiority of DVA in practical edge computing systems. On the other hand, the RR method is relatively unstable due to its randomized decision making strategy, which performs much worse in practical systems.

In summary, compared with the baseline methods, DVA can effectively balance the service placement, refreshing, and offloading costs, which makes it a superior method for placing timely refreshing applications in practical systems.

\section{Conclusions and Future Work}
\label{sec_conclusions}
This paper investigates the problem of placing timely refreshing services at the network edge. Aiming at minimizing the backhaul transmission cost, we formulate an integer non-linear programming problem and prove its hardness. The main difficulty in the formulated problem is the complex spatial-and-temporal coupling property brought by the practical maintaining cost of services. To solve the problem, we first decouple it in the temporal domain by transforming it into a shortest-path problem. Then, a dynamic programming-based algorithm is proposed to obtain the optimal solution with exponential time complexity. To reduce the computational complexity, we further design a light-weighted algorithm, DVA, to decouple the service placement, refreshing, and offloading decisions in the spatial domain based on the estimation of future transmission cost. The worst performance of DVA is proved to be bounded. Real-trace simulations and 5G testbed experiments show that DVA can reduce the backhaul transmission cost by up to 59.1\% compared with the state-of-the-art baselines. For future work, we will consider other stochastic refreshing mechanisms and dependable service placement problem. Besides, it is also an interesting direction to jointly design the service placement, computation offloading, and container life-cycle management strategies.


\end{document}